\newcommand*\rfrac[2]{{}^{#1\!\!}/_{\!#2}}
\newcommand*\dist{\text{dist}}
\newcommand*{\bigcupdot}{\ensuremath{\mathop{\makebox[-2pt]{\hspace{14pt}{\(\cdot\)}}\bigcup}}}
\numberwithin{equation}{section}
\newtheorem{thm}{Theorem}[section]
\newtheorem{lem}[thm]{Lemma}
\newtheorem{cor}[thm]{Corollary}
\newtheorem{dfn}[thm]{Definition}
\begin{document}

\title{Walking on the Edge and Cosystolic Expansion}
\author{Tali Kaufman\thanks{Bar-Ilan University, ISRAEL. Email: \texttt{kaufmant@mit.edu}.
Research supported in part by ERC and BSF.}
\and David Mass\thanks{Bar-Ilan University, ISRAEL. Email: \texttt{dudimass@gmail.com}.}}
\maketitle

\begin{abstract}
Random walks on regular bounded degree expander graphs have numerous applications.
A key property of these walks is that they converge rapidly to the uniform distribution on the vertices.
The recent study of expansion of high dimensional simplicial complexes, which are the high dimensional analogues of graphs, calls for the natural generalization of random walks to higher dimensions.
In particular, a high order random walk on a $2$-dimensional simplicial complex moves at random between neighboring edges of the complex, where two edges are considered neighbors if they share a common triangle.
We show that if a regular $2$-dimensional simplicial complex is a cosystolic expander and the underlying graph of the complex has a spectral gap larger than $1/2$, then the random walk on the edges of the complex converges rapidly to the uniform distribution on the edges.
\end{abstract}

\section{Introduction}
Random walks on bounded degree expander graphs have numerous applications in many fields (see~\cite{HLW06} for an excellent survey).
In many theoretical and practical computational problems it is necessary to draw samples uniformly at random from some huge space.
It is well known that a random walk on an expander graph converges rapidly to its stationary distribution, which is the uniform distribution in the case of regular graphs.
When the graph is of a bounded degree it is possible to simulate this random walk efficiently.
Thus, one can efficiently obtain a set of samples, which looks like it has been chosen uniformly and independently at random, by simulating a random walk on a bounded degree expander graph.

In recent years the study of expansion in higher dimensions has emerged, introducing high dimensional simplicial complexes as the analogues of graphs in higher dimensions.
A $2$-dimensional simplicial complex can be viewed as a hypergraph that contains vertices, edges and triangles, with a closure property.
Namely, if a triangle is in the hypergraph, then all of its edges and vertices are also in the hypergraph.

The well studied random walk can be generalized naturally to higher dimensions by walking on the edges of the complex instead of walking on the vertices, where two edges are considered neighbors if they share a common triangle.
If the complex is regular in the dimension of its edges, i.e., every edge is contained in the same number of triangles, then the stationary distribution of this random walk is the uniform distribution on the edges.
A natural question to be asked, which we address in this work, is the following:

\paragraph{Question.}{\em How fast does a random walk on the edges of a regular $2$-dimensional simplicial complex converge to the uniform distribution on the edges}.
\bigskip

In the case of graphs, it is well known that the speed of convergence of the random walk is controlled by the spectral expansion of the graph.
Though, when moving to higher dimensions, the high dimensional version of spectral expansion is not known to imply the fast convergence of the described random walk.

For a $2$-dimensional simplicial complex $X$, we define the {\em edge-graph} of the complex, denoted by $G_1(X)$, as follows.
The vertices of $G_1(X)$ are the edges of $X$, and there is an edge between two vertices in $G_1(X)$ if the corresponding edges in $X$ share a common triangle.
It is easy to see that a random walk on the edges of $X$ is exactly the same as a random walk on the vertices of $G_1(X)$.
Thus, if the edge-graph is a spectral expander, then the high order random walk on the complex converges rapidly to the uniform distribution on the edges.

The question we address here is what are the requirements on the complex which imply the expansion of its edge-graph.

In a recent work~\cite{KM16} it has been shown that if the complex is composed of excellent spectral expander graphs, i.e., every small piece of the entire complex is a spectral expander by itself, then the high order random walk converges rapidly to the uniform distribution.
In this work we show that in the $2$-dimensional case it is enough for the global complex to be a combinatorial expander (in its high dimensional analogue), for the high order random walk to converge rapidly to the uniform distribution on the edges.
This global expansion property might hold even without the expansion of every piece of the complex.
The only additional requirement for the rapid convergence of this random walk is that the underlying graph of the complex (which is obtained by ignoring the triangles of the complex) has a spectral gap larger than $\rfrac{1}{2}$.
This is a reasonable assumption as it is already known that there are {\em bounded degree} high dimensional complexes which have this property (see~\cite{KKL14, EK15} for an explicit construction).

One of the motivations behind this high order random walk is the following.
In the case of expander graphs, one can obtain a uniformly random vertex by exploring {\em all} possible walks of some small length from any initial vertex of the graph.
Now, consider the underlying graph of a $2$-dimensional simplicial complex for which the random walk on its edges converges rapidly to the uniform distribution.
In this case, in order to obtain a uniformly random vertex, it is enough to explore only walks for which their subsequent edges share a common triangle.
The length of the walks to be explored might be slightly larger than in the previous case, but many of these walks can be ignored since they do not induce a walk on the edges of the complex.
Namely, we achieve a way to sample a uniformly random vertex of the graph by considering only a {\em subset} of all the possible walks of some small length.

We achieve our result by showing that the edge-graph of the complex is a combinatorial expander, and then we use the relation between combinatorial and spectral expansion of graphs in order to deduce that the edge-graph is a spectral expander (which controls the speed of convergence of the walk on the edges of the complex).
For the edge-graph to be a combinatorial expander we need to show that every subset of vertices has many outgoing edges.
The method we use, in essence, is the following.
Consider a $2$-dimensional simplicial complex $X=(V,E,T)$, and its edge-graph $G_1(X)=(V_{G_1},E_{G_1})$ (where $V_{G_1}=E$).
For any subset of edges $F \subseteq E$ and any vertex $v \in V$ in the complex, we define the {\em local view} of $v$ with regard to $F$ to be the edges of $F$ for which one of their endpoints is $v$.
For every subset of vertices in the edge-graph $S \subseteq V_{G_1}$ we consider the corresponding subset of edges in the complex $F \subseteq E$.
We show a relation between the outgoing edges of $S$ and the expansion of the corresponding edges in the complex.
Specifically, the number of the outgoing edges of $S$ can be calculated by considering the expansion of all the local views of vertices in $X$ with regard to the corresponding edges $F$.
Each local view is a subset of edges of the complex, and thus by the global expansion property of the complex it has to expand a lot.
The only subtlety is that very large local views are not guaranteed to expand so well.
We show that if the underlying graph of the complex has a spectral gap larger than $\rfrac{1}{2}$, then there must be many local views which are not too large.
Thus, in this case, every subset of vertices in the edge-graph has many outgoing edges, which implies that the edge-graph is a combinatorial expander as required.

\subsection{Expander graphs}
We recall some basic properties of expander graphs.
Throughout this section $G=(V,E)$ is a $k$-regular undirected graph on $n$ vertices, i.e., every vertex is contained in exactly $k$ edges and $|V| = n$.
The {\em Cheeger constant} of $G$, denoted by $h(G)$, is defined as
$$h(G) = \min_{\substack{S \subseteq V \\ 0 < |S| \le \frac{n}{2}}}\frac{|E(S,\bar S)|}{|S|},$$
where $E(S,\bar S)$ is the set of edges with one endpoint in $S$ and one endpoint in $\bar S$.
Note that $0 \le h(G) \le k$, thus we denote its normalized value by $\widetilde h(G) = h(G)/k$.

The graph $G$ is said to be an {\em $\epsilon$-combinatorial expander} if $\widetilde h(G) \ge \epsilon$, for some $\epsilon > 0$.
A {\em family of graphs} $\{G_i\}_{i \in \mathbb N}$ is called a family of $\epsilon$-combinatorial expanders if there exists a constant $\epsilon > 0$, such that every graph in the family is an $\epsilon$-combinatorial expander.

The adjacency matrix of $G$, denoted by $A = A(G)$, is the symmetric $n \!\times\! n$ matrix, where $A(u,v) = 1$ if $\{u,v\} \in E$, otherwise $A(u,v) = 0$.
Denote by $\lambda_1(G) \ge \lambda_2(G) \ge \dotsb \ge \lambda_n(G)$ the eigenvalues of $A$.
For ease of notation we will just write $\lambda_i$ instead of $\lambda_i(G)$ where the graph $G$ is clear from the context.
It is well known that for every $k$-regular graph $\lambda_1 = k$.
It is also known that $\lambda_n = -k$ if and only if $G$ has a bipartite component, i.e., a disconnected component which can be partitioned into two subsets of vertices such that there are no edges inside of each subset.
Denote by $\widetilde \lambda_1 \ge \widetilde \lambda_2 \ge \dotsb \ge \widetilde \lambda_n$ the normalized eigenvalues of $A$, where $\widetilde \lambda_i = \lambda_i / k$ for every $1 \le i \le n$.
The {\em spectral gap} of $G$ is defined as $1 - \widetilde \lambda_2$, which captures the gap between the trivial eigenvalue of $A$ and the largest non-trivial eigenvalue.
Denote by $\widetilde \lambda = \max\{|\widetilde \lambda_2|, |\widetilde \lambda_n|\}$ the largest non-trivial eigenvalue of $A$ in absolute value.

The graph $G$ is said to be an {\em $\epsilon$-spectral expander} if $\widetilde \lambda \le \epsilon$, for some $0 < \epsilon < 1$.
A family of graphs is called a family of $\epsilon$-spectral expanders if there exists a constant $0 < \epsilon < 1$, such that every graph in the family is an $\epsilon$-spectral expander.

\subsection{Random walks}
For any vertex $v \in V$, denote by $N(v) = \{u \in V \;|\; \{u,v\} \in E\}$ the neighbors of $v$.
A random walk on a graph is a sequence of vertices $v_0, v_1, \dotsc \in V$, such that,
\begin{enumerate}
  \item $v_0$ is chosen from some initial probability distribution on the vertices.
  \item For every $i \ge 1$, the vertex $v_i$ is chosen uniformly at random from $N(v_{i-1})$.
\end{enumerate}

Denote by $\mathbf p_0 \in \mathbb R^n$ the initial probability distribution, and by $\mathbf p_i \in \mathbb R^n$ the probability distribution after $i$ steps of the walk, i.e., $\mathbf p_i(v)$ is the probability to find the walker on vertex $v$ after $i$ steps of the walk.
Denote by $\mathbf u = (\rfrac{1}{n}, \dotsc, \rfrac{1}{n})$ the uniform distribution.
We define a rapid mixing of the random walk as follows.
\begin{dfn}[Rapid mixing]
Let $G=(V,E)$ be a $k$-regular graph.
The random walk on $G$ is said to be {\em $\alpha$-rapidly mixing}, for some $\alpha > 0$, if for any initial probability distribution $\mathbf p_0 \in \mathbb R^n$ and any $i \in \mathbb N$,
$$\|\mathbf p_i - \mathbf u\|_2 \le \alpha^i.$$
\end{dfn}

The following Theorem is a very useful and well known property of expander graphs.
\begin{thm}\label{rapid-mixing-thm}
Let $G=(V,E)$ be a $k$-regular graph.
If $G$ is an $\epsilon$-spectral expander, then the random walk on $G$ is $\alpha$-rapidly mixing for $\alpha=\epsilon$.
\end{thm}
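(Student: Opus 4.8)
The plan is to identify the one-step transition operator of the walk with the normalized adjacency matrix $M = \frac{1}{k}A$. By the definition of the walk, for any initial distribution $\mathbf p_0$ the distribution after one step satisfies $\mathbf p_1(v) = \sum_{u \in N(v)} \frac{1}{k}\,\mathbf p_0(u) = \frac{1}{k}\sum_{u} A(v,u)\,\mathbf p_0(u) = (M\mathbf p_0)(v)$, using that $A$ (hence $M$) is symmetric; iterating gives $\mathbf p_i = M^i \mathbf p_0$ for every $i$. Since $M$ is real symmetric it admits an orthonormal eigenbasis $\mathbf v_1,\dots,\mathbf v_n$ with eigenvalues $\widetilde\lambda_1 = 1 \ge \widetilde\lambda_2 \ge \dotsb \ge \widetilde\lambda_n$, where $\mathbf v_1 = \frac{1}{\sqrt n}\mathbf 1$ is the normalized all-ones vector, and the $\epsilon$-spectral expansion hypothesis $\widetilde\lambda \le \epsilon$ (with $\widetilde\lambda = \max\{|\widetilde\lambda_2|,|\widetilde\lambda_n|\}$) gives $|\widetilde\lambda_j| \le \epsilon$ for every $j \ge 2$.

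First I would decompose $\mathbf p_0 = \mathbf u + \mathbf w$ with $\mathbf w := \mathbf p_0 - \mathbf u$. Since $\mathbf p_0$ and $\mathbf u$ both have coordinate sum $1$, the vector $\mathbf w$ has coordinate sum $0$, i.e. $\mathbf w \perp \mathbf 1$, so $\mathbf w$ lies in the span of $\mathbf v_2,\dots,\mathbf v_n$. As $\mathbf u = \frac{1}{\sqrt n}\mathbf v_1$ is a scalar multiple of $\mathbf v_1$ we have $M\mathbf u = \mathbf u$, hence
$$\mathbf p_i - \mathbf u = M^i\mathbf p_0 - \mathbf u = M^i\mathbf u + M^i\mathbf w - \mathbf u = M^i\mathbf w.$$

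Next I would estimate $\|M^i\mathbf w\|_2$. Writing $\mathbf w = \sum_{j \ge 2} c_j \mathbf v_j$ gives $M^i\mathbf w = \sum_{j \ge 2} \widetilde\lambda_j^{\,i}\, c_j\mathbf v_j$, so by orthonormality and $|\widetilde\lambda_j| \le \epsilon$,
$$\|M^i\mathbf w\|_2^2 = \sum_{j \ge 2} \widetilde\lambda_j^{\,2i}\, c_j^2 \le \epsilon^{2i}\sum_{j \ge 2} c_j^2 = \epsilon^{2i}\,\|\mathbf w\|_2^2.$$
To finish I would bound $\|\mathbf w\|_2$. Since $\mathbf u \perp \mathbf w$, Pythagoras gives $\|\mathbf w\|_2^2 = \|\mathbf p_0\|_2^2 - \|\mathbf u\|_2^2 = \|\mathbf p_0\|_2^2 - \frac{1}{n}$; and because $\mathbf p_0$ has non-negative entries summing to $1$, $\|\mathbf p_0\|_2^2 = \sum_v \mathbf p_0(v)^2 \le \big(\max_v \mathbf p_0(v)\big)\sum_v \mathbf p_0(v) \le 1$. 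Hence $\|\mathbf w\|_2 \le 1$, and combining the two displays yields $\|\mathbf p_i - \mathbf u\|_2 = \|M^i\mathbf w\|_2 \le \epsilon^i\|\mathbf w\|_2 \le \epsilon^i = \alpha^i$, as required.

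This is the standard spectral mixing argument, so I do not expect a genuine obstacle; the only two points that require care are (i) observing that $\mathbf p_0 - \mathbf u$ is orthogonal to the top eigenvector $\mathbf 1$, which is precisely what lets the non-trivial eigenvalue bound $\epsilon$ govern the decay, and (ii) the elementary inequality $\|\mathbf p_0 - \mathbf u\|_2 \le 1$ valid for any probability vector $\mathbf p_0$. Note also that using $\widetilde\lambda = \max\{|\widetilde\lambda_2|,|\widetilde\lambda_n|\}$ rather than just $\widetilde\lambda_2$ is what ensures the estimate also contracts the eigendirections with eigenvalue near $-1$, so no separate treatment of $\widetilde\lambda_n$ (the "near-bipartite" part of the spectrum) is needed.
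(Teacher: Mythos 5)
Your proof is correct: the paper states Theorem \ref{rapid-mixing-thm} without proof, citing it as a well-known property of expander graphs, and your argument is exactly the standard spectral-decomposition proof that is intended (expand $\mathbf p_0 - \mathbf u$ in the eigenbasis of $A/k$, note it is orthogonal to the all-ones vector, and contract by $\widetilde\lambda \le \epsilon$ at each step, using $\|\mathbf p_0 - \mathbf u\|_2 \le 1$). All steps check out, including the two points you flag as requiring care.
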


\subsection{$2$-dimensional simplicial complexes}
We present here some basic definitions of $2$-dimensional simplicial complexes.
A $2$-dimensional simplicial complex $X=(V,E,T)$ contains vertices, edges and triangles, denoted by $X(0) = V$, $X(1) = E$ and $X(2) = T$, respectively, with a closure property.
Namely, for every triangle $\{u,v,w\} \in T$, all of its edges exist in the complex, i.e., $\{u,v\},\{u,w\},\{v,w\} \in E$, and for every edge $\{u,v\} \in E$, both of its endpoints are in the complex, i.e., $u,v \in V$.
The vertices of the complex are called {\em faces} of dimension $0$, the edges are the $1$-dimensional faces, and the triangles are the $2$-dimensional faces.
The complex is termed {\em bounded degree} if the number of edges and triangles incident to each vertex is bounded by some constant, independent of the number of vertices in the complex.
The complex is said to be $(k_0, k_1)$-regular if every vertex is contained in exactly $k_0$ edges and every edge is contained in exactly $k_1$ triangles.

For any subset of vertices $S \subseteq V$, its {\em coboundary}, denoted by $\delta(S)$, is the set of edges with one endpoint in $S$ and one endpoint in $\bar S$, i.e., $\delta(S) = E(S, \bar S)$.
The {\em $0$-coboundaries}, denoted by $B^0(X)$, are the subsets of vertices for which their coboundary is always $0$, i.e., $B^0(X) = \{\emptyset, V\}$.
These are also called the {\em trivial zeros}.
The {\em $0$-cocycles}, denoted by $Z^0(X)$, are all the subsets of vertices for which their coboundary is $0$, i.e., $Z^0(X) = \{S \subseteq V \;|\; |\delta(S)| = 0\}$.
Note that $B^0(X) \subseteq Z^0(X)$, with equality if and only if all the vertices of the complex are connected.

For any subset of edges $F \subseteq E$, its {\em coboundary}, denoted by $\delta(F)$, is the set of triangles with an {\em odd} number of edges in $F$.
To make this definition clearer, we view $F$ as a function $F : E \rightarrow \{0,1\}$, where $F(\{u,v\}) = 1$ if and only if $\{u,v\} \in F$.
Then,
$$\delta(F) = \{\{u,v,w\} \in T \;|\; F(\{u,v\}) + F(\{u,w\}) + F(\{v,w\}) = 1 \mkern-10mu\mod 2\}.$$
The {\em $1$-coboundaries}, denoted by $B^1(X)$, are the subsets of edges for which their coboundary is always $0$, or again, the trivial zeros.
These can be identified with the {\em cuts} in the complex, i.e., all the edges between any partition of the vertices into two subsets, or analogously, $B^1(X) = \{\delta(S) \;|\; S \subseteq V\}$.
The {\em $1$-cocycles}, denoted by $Z^1(X)$, are all the subsets of edges for which their coboundary is $0$, i.e., $Z^1(X) = \{F \subseteq E \;|\; |\delta(F)| = 0\}$.
Note again that $B^1(X) \subseteq Z^1(X)$.

\subsection{$2$-dimensional expanders}
High order combinatorial expansion has received much attention recently.
There are two mainly studied variants of combinatorial expansion in higher dimensions.
They both generalize the Cheeger constant of graphs to high dimensional simplicial complexes.
We present both of them just for $2$-dimensional regular complexes, which are enough for our purposes.

Recall that in graphs, combinatorial expansion is defined with relation to the Cheeger constant of the graph.
An equivalent definition is the following.
Let $G=(V,E)$ be a $k$-regular graph.
For any two subsets of vertices $S,T \subseteq V$, the {\em distance} of $S$ from $T$ is defined as $\dist(S, T) = |S \setminus T| + |T \setminus S|$.
The distance of $S \subseteq V$ from a set of subsets of vertices $T_1, \dotsc, T_m \subseteq V$ is defined as $\dist(S, \{T_1, \dotsc, T_m\}) = \min_{1 \le i \le m}\dist(S, T_i)$.
The graph $G$ is said to be an $\epsilon$-combinatorial expander, for $\epsilon > 0$, if for every subset $S \subseteq V$, the following holds:
\begin{itemize}
  \item If $|E(S, \bar S)| = 0$, then $S$ is a {\em trivial non-expanding set}, i.e., $S \in \{\emptyset, V\}$.
  \item Otherwise, $\frac{|E(S, \bar S)|}{\dist(S, \{\emptyset, V\})} = \frac{|E(S, \bar S)|}{\min\{|S|, |\bar S|\}} \ge \epsilon k$.
\end{itemize}

In other words, the only non-expanding sets of $G$ are the trivial non-expanding sets, which are either the empty set or the set of all the vertices, and every other set has to expand with proportion to its size.
Note that this definition is equivalent to the common definition of combinatorial expansion of graphs, which was presented above.

The first generalization of combinatorial expansion to higher dimensions is termed {\em coboundary expansion}~\cite{LM06}.
Let $X=(V,E,T)$ be a $2$-dimensional $(k_0, k_1)$-regular simplicial complex.
$X$ is said to be an {\em $\epsilon$-coboundary expander}, for $\epsilon > 0$, if for every $i \in \{0,1\}$, and every subset of $i$-dimensional faces $S \subseteq X(i)$, the following holds:
\begin{itemize}
  \item If $|\delta(S)| = 0$, then $S$ is a trivial zero, i.e., $S \in B^i(X)$.
  \item Otherwise, $\frac{|\delta(S)|}{\dist(S, B^i(X))} \ge \epsilon k_i$.
\end{itemize}

This definition is a natural generalization of the combinatorial expansion of graphs, under the observation that for any subset of vertices $S \subseteq V$, $\delta(S) = E(S, \bar S)$.
The existence of bounded degree expanders according to this definition is not known! (either random or explicit).

The next studied combinatorial expansion in higher dimensions is termed {\em cosystolic expansion}~\cite{G10, EK15}.
This definition is a relaxation of the previous one;
it allows non-expanding sets to be non-trivial, as long as they are very large.
Let $X=(V,E,T)$ be a $2$-dimensional $(k_0, k_1)$-regular simplicial complex.
$X$ is said to be an {\em $(\epsilon, \mu)$-cosystolic expander}, for $\epsilon, \mu > 0$, if for every $i \in \{0,1\}$, and every subset of $i$-dimensional faces $S \subseteq X(i)$, the following holds:
\begin{itemize}
  \item If $|\delta(S)| = 0$, then either $S$ is a trivial zero, i.e., $S \in B^i(X)$, or $S$ is very large, i.e., $|S| \ge \mu|X(i)|$.
  \item Otherwise, $\frac{|\delta(S)|}{\dist(S, Z^i(X))} \ge \epsilon k_i$.
\end{itemize}

Note that in this case, since there are non-expanding sets which are not trivial, then the distance is taken from the cocycles of the complex, which contain non-trivial non-expanding sets.
This definition of expansion is more natural to the world of computer science as it can be viewed as a property testing question, where the property is whether a given set is a cocycle (see~\cite{KL14} for more on high dimensional expansion and property testing).
Moreover, this definition of expansion implies a property called the topological overlapping of a complex, which was heavily studied~\cite{G10, DKW15}.
The first known explicit bounded degree high dimensional simplicial complexes according to this definition have been constructed in~\cite{KKL14, EK15}.

\subsection{Our contribution}
In this work we study further the high order random walk defined in~\cite{KM16}.
We show that the walk on the edges of a regular $2$-dimensional cosystolic (or coboundary) expander converges rapidly to the uniform distribution on the edges.
We achieve our result by a new method which relates the high order combinatorial expansion of the complex to the combinatorial expansion of its edge-graph.
Then we use the relation between spectral and combinatorial expansion of graphs in order to deduce the concentration of the spectrum of the edge-graph's adjacency matrix, which implies the rapid convergence of the high order random walk.

\subsubsection{High order random walks}
Let $X=(V,E,T)$ be a $2$-dimensional $(k_0, k_1)$-regular simplicial complex.
For any edge $e \in E$, denote by $N(e) = \{f \in E \;|\; e \cup f \in T\}$ the neighbors of $e$.
The high order random walk on $X$ is a sequence of edges $e_0, e_1, \dotsc \in E$, such that,
\begin{enumerate}
  \item $e_0$ is chosen from some initial probability distribution on the edges.
  \item For every $i \ge 1$, the edge $e_i$ is chosen uniformly at random from $N(e_{i-1})$.
\end{enumerate}

The high order random walk is said to be {\em $\alpha$-rapidly mixing} if for any initial probability distribution on the edges $\mathbf p_0 \in \mathbb R^{|E|}$ and any $i \in \mathbb N$,
$$\|\mathbf p_i - \mathbf u\|_2 \le \alpha^i.$$
where $\mathbf p_i \in \mathbb R^{|E|}$ is the probability distribution after $i$ steps of the walk.

\subsubsection{Edge-graph and underlying graph}
For a $2$-dimensional simplicial complex $X=(V,E,T)$, denote by $G_0(X) = (V,E)$ the underlying graph of the complex, which is obtained by simply ignoring the triangles of the complex.
We define the {\em edge-graph} of $X$, denoted by $G_1(X)$, to be the graph that its vertices are the edges of $X$, and there is an edge between two vertices if the corresponding edges in $X$ are neighbors.
Note that $G_0(X)$ and $X$ have the same set of vertices, while the vertices of $G_1(X)$ are different; they are the edges of $X$.
It is clear that if $G_1(X)$ is a spectral expander, then the high order random walk on $X$ converges rapidly to the uniform distribution on the edges (recall Theorem \ref{rapid-mixing-thm}).
We prove the following Theorem.
\begin{thm}[Main Theorem, informal, for formal see Theorem \ref{main-thm}]
If $X$ is a $2$-dimensional cosystolic expander and $G_0(X)$ has a spectral gap larger than $\rfrac{1}{2}$, then $G_1(X)$ is a spectral expander, and hence the high order random walk on $X$ mixes rapidly.
\end{thm}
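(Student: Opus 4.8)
The plan is to implement the route sketched in the introduction: first prove that the edge-graph $G_1(X)=(V_{G_1},E_{G_1})$ — whose vertex set is $X(1)=E$ and which is $2k_1$-regular — is a combinatorial expander, and then convert this into the required bound on its spectrum. Fix $S\subseteq V_{G_1}$ and let $F\subseteq E$ be the corresponding set of edges of $X$; for $v\in V$ write $F_v=\{e\in F:v\in e\}$ for the local view of $v$ at $F$. The first step is a purely combinatorial identity. For a triangle $t\in T$ with $j$ edges in $F$ there are $j(3-j)$ unordered pairs of its edges separated by $F$, and each such pair is exactly an edge of $G_1$ crossing the cut $(S,\bar S)$; counting the same quantity but grouping it by the vertex of $t$ at which the separation occurs gives $\sum_{v\in V}|\delta(F_v)|$, so
$$|E_{G_1}(S,\bar S)|=\sum_{v\in V}|\delta(F_v)|.$$
Since every edge of $F_v$ contains $v$, a short check shows that the coboundary $\delta(F_v)$ inside $X$ is precisely the graph-cut of $F_v$ inside the link of $v$ (a $k_1$-regular graph on $k_0$ vertices); in particular $|\delta(F_v)|=0$ if and only if $F_v$ is $\emptyset$ or the full star at $v$.

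Next I would estimate each term $|\delta(F_v)|$ by applying the cosystolic expansion of $X$ to the set $F_v\subseteq X(1)$. Because $X$ has bounded degree, $|F_v|\le k_0$ is an absolute constant, so once the complex has enough edges no $F_v$ can be one of the enormous non-trivial cocycles permitted by the cosystolic condition; hence $\dist(F_v,Z^1(X))=\dist(F_v,B^1(X))$. Using that $G_0(X)$, having spectral gap above $\rfrac{1}{2}$, is a combinatorial expander (the easy direction of Cheeger gives $\widetilde h(G_0)>\rfrac{1}{4}$, so every non-trivial cut of $X$ has many edges), one checks that among the cuts $B^1(X)$ the one closest to $F_v$ is $\emptyset$ (at distance $|F_v|$) or the full star at $v$ (at distance $k_0-|F_v|$), so $\dist(F_v,B^1(X))=\min\{|F_v|,\,k_0-|F_v|\}$. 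The cosystolic inequality then gives, for every $v\in V$,
$$|\delta(F_v)|\ \ge\ \epsilon k_1\cdot\min\{|F_v|,\,k_0-|F_v|\}.$$

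It remains to show that $\sum_{v}\min\{|F_v|,k_0-|F_v|\}\ge c\sum_v|F_v|$ for an absolute constant $c>0$ whenever $S$ is non-trivial and, without loss of generality, $|S|\le|\bar S|$, i.e.\ $\sum_v|F_v|=2|F|\le|E|$; with the two displays above this makes $G_1(X)$ a combinatorial expander. This is where the spectral-gap hypothesis is genuinely used, to control the heavy vertices $H=\{v:|F_v|>k_0/2\}$, which contribute little on the left yet carry mass on the right. When $|H|$ is a small fraction of $|V|$, one argues that $G_0$ contains many edges of $F$ leaving $H$: each $v\in H$ has fewer than $k_0/2$ incident $\bar F$-edges, so at most $\tfrac{k_0}{2}|H|$ edges of $\bar F$ leave $H$, whereas the spectral lower bound $|E(H,\bar H)|\ge(1-\widetilde\lambda_2)\,k_0|H|\bigl(1-\tfrac{|H|}{|V|}\bigr)$ forces $\bigl((1-\widetilde\lambda_2)(1-|H|/|V|)-\tfrac12\bigr)k_0|H|$ edges of $F$ to leave $H$, a positive multiple of $k_0|H|$ exactly because $1-\widetilde\lambda_2>\rfrac{1}{2}$; each such edge lies in the local view of its endpoint outside $H$, where $\min\{|F_v|,k_0-|F_v|\}=|F_v|$. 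When $|H|$ is instead a large fraction of $|V|$, the constraint $\sum_{v\in H}|F_v|>\tfrac{k_0}{2}|H|$ together with $\sum_v|F_v|\le|E|$ already makes $\sum_{v\in H}(k_0-|F_v|)=k_0|H|-\sum_{v\in H}|F_v|$ large. Balancing the two regimes yields $\sum_v\min\{|F_v|,k_0-|F_v|\}\ge c\sum_v|F_v|$. Finally $|E_{G_1}(S,\bar S)|=0$ forces every $F_v$ to be $\emptyset$ or a full star, which by connectedness of $G_0$ means $F\in\{\emptyset,E\}$, so $G_1(X)$ is a genuine combinatorial expander.

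To finish, the hard direction of Cheeger's inequality turns this into $\widetilde\lambda_2(G_1)\le 1-c''$ for a constant $c''>0$. For the other extreme of the spectrum, note that the three edges of every triangle of $X$ span a triangle of $G_1(X)$, so under any $\{-1,0,1\}$-labelling of $V_{G_1}$ at least a third of $E_{G_1}$ is monochromatic; a standard Cheeger-type inequality for the smallest adjacency eigenvalue then yields $\widetilde\lambda_n(G_1)>-1$. Hence $\widetilde\lambda(G_1)=\max\{|\widetilde\lambda_2|,|\widetilde\lambda_n|\}<1$, so $G_1(X)$ is a spectral expander, and by the edge-graph version of Theorem \ref{rapid-mixing-thm} the high order random walk on $X$ mixes rapidly. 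The main obstacle is the heavy-vertex argument of the third paragraph: this is exactly where cosystolic expansion of $X$ alone does not suffice and the $\rfrac{1}{2}$ spectral gap of $G_0(X)$ must be invoked, and pushing the two regimes to overlap — which is what forces the threshold to be $\rfrac{1}{2}$ rather than something larger — is the technical heart of the proof.
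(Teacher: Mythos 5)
Your outer structure coincides with the paper's: the identity $|E_{G_1}(S,\bar S)|=\sum_{v\in V}|\delta(F_v)|$ (your triangle-by-triangle count of $j(3-j)$ separated pairs is a correct alternative derivation of Lemma~\ref{outgoing-edges-lemma}), the computation $\dist(F_v,Z^1(X))=\min\{|F_v|,\,k_0-|F_v|\}$ via the $\mu$-largeness of non-trivial cocycles and the largeness of non-trivial cuts, the cosystolic inequality applied to each local view, and the final passage through Cheeger's inequality for $\widetilde\lambda_2(G_1)$ plus the cited Cheeger-type bound for $\widetilde\lambda_n(G_1)$. (One small caveat: for the cut computation you need $|\delta(S)|\ge k_0$ for \emph{every} non-trivial $S$, and $\widetilde h(G_0)>\rfrac{1}{4}$ alone gives only $\rfrac{k_0|S|}{4}$; the paper derives $|\delta(S)|\ge k_0$ from the Expander Mixing Lemma for $|V|$ large, and sets of size $2$ or $3$ need a separate one-line count.)

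The genuine gap is in your third paragraph. Write $h=|H|/|V|$ and $\phi=|F|/|E|\le\rfrac{1}{2}$. Your ``small $H$'' estimate produces at least $\bigl((1-\widetilde\lambda_2)(1-h)-\rfrac{1}{2}\bigr)k_0|H|$ edges of $F$ leaving $H$, and this coefficient is positive only when $h<\frac{1-2\widetilde\lambda_2}{2(1-\widetilde\lambda_2)}$, a threshold that tends to $0$ as $\widetilde\lambda_2\to\rfrac{1}{2}$ (for $\widetilde\lambda_2=0.4$ it is $h<\rfrac{1}{6}$). Your ``large $H$'' estimate $k_0|H|-\sum_{v\in H}|F_v|\ge k_0|H|-|E|$ is positive only when $h>\rfrac{1}{2}$. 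These regimes do not meet, and in the window between them (e.g.\ $\widetilde\lambda_2=0.4$, $\phi\approx h\approx 0.3$, so $k_0|H|\approx 2|F|$ with almost all endpoints of $F$ inside $H$) both of your displayed lower bounds are non-positive, as is the fallback $\sum_{v\notin H}|F_v|\ge 2|F|-k_0|H|$; so ``balancing the two regimes'' does not close the argument as stated. The paper avoids this by abandoning the single threshold $\rfrac{k_0}{2}$: it calls a vertex fat only when $|F_v|>\eta k_0$ for a tuned $\eta$ close to $1$ (which caps the density of fat vertices at $\eta^{-1}\phi$), inserts the semi-fat buffer class $\rfrac{k_0}{2}<|F_v|\le\eta k_0$ whose members each contribute $\epsilon k_1(1-\eta)k_0$ by Lemma~\ref{local-view-coboundary-lemma}, splits on whether there are many semi-fat vertices, and otherwise applies the Expander Mixing Lemma to $A\cup B$ to bound the number of $F$-edges with both endpoints fat-or-semi-fat by a constant multiple of $|F|$ itself (Lemma~\ref{sum-of-coboundaries-lemma}). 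Your single-threshold scheme is repairable --- every edge of $E(H,\bar H)$ in $G_0(X)$ contributes a unit to $\sum_v\min\{|F_v|,k_0-|F_v|\}$ (to $|F_u|$ at its endpoint $u\notin H$ if it lies in $F$, to $k_0-|F_v|$ at its endpoint $v\in H$ otherwise), so $\sum_v\min\{|F_v|,k_0-|F_v|\}\ge|E(H,\bar H)|\ge k_0|H|\bigl(1-h-\widetilde\lambda_2\bigr)$, which is a positive multiple of $|F|$ exactly in the missing window $h\approx\phi\le\rfrac{1}{2}$ because $\widetilde\lambda_2<\rfrac{1}{2}$ --- but some such additional input is required; as written the heart of the proof has a hole.
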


\subsection{Related work}
In a recent work of Parzanchevski and Rosenthal~\cite{PR12} a notion of high order {\em topological} random walk on high dimensional simplicial complexes has been studied.
The topological random walk was designed to expose the topological properties of the complex.
Parzanchevski and Rosenthal define a variant of the stationary distribution of the random walk and show its relation to the spectrum of the high order laplacian on the space that is {\em orthogonal to the coboundaries}.
In our work, the stationary distribution of the random walk is already known to be the uniform distribution.
Moreover, it is already known that the speed of convergence is controlled by the concentration of the spectrum of the high order adjacency matrix on the space that is {\em orthogonal to the constant functions}.
We show here that cosystolic expansion of a complex implies the concentration of the spectrum of its high order adjacency matrix, a thing that can not be deduced in any way from~\cite{PR12}.

In a previous work of the authors of this work~\cite{KM16} the speed of convergence of the high order random walk on simplicial complexes of every dimension has been studied.
It has been shown that if all the links of a simplicial complex (i.e., all the pieces which compose the entire complex) are excellent spectral expanders, then the high order random walk on it converges rapidly to its stationary distribution.
Here we show that for the $2$-dimensional case, the rapid convergence of the high order random walk can be deduced from the global expansion property of the complex, which might hold even without the assumption that every small piece of it expands.
Moreover, while in~\cite{KM16} the concentration of the spectrum of the high order adjacency matrix was deduced through a new notion of high dimensional combinatorial expansion, which is termed colorful expansion, here we show how to deduce it {\em directly} from the cosystolic expansion of the complex.
Thus, the method used here is different from the one used in~\cite{KM16}.

\section{Preliminaries}
We present here some preliminaries that are needed for the next section.

A well known pseudorandom property of expander graphs is captured by the Expander Mixing Lemma, proven by Alon and Chung~\cite{AC06}.
We state here one side of it, which states that for every subset of vertices in an expander graph, the number of edges for which both of their endpoints is in the set is approximately the expected number as in a random graph.
\begin{lem}[Expander Mixing Lemma]\label{expander-mixing-lemma}
Let $G=(V,E)$ be a $k$-regular graph.
For any subset of vertices $S \subseteq V$,
$$2|E(S)| \le k|S|\!\left(\frac{|S|}{|V|} + \widetilde \lambda_2(G)\!\right),$$
where $E(S) = E(S,S)$ is the set of edges with both endpoints in $S$.
\end{lem}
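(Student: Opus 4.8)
The plan is to run the standard spectral argument on the quadratic form of the adjacency matrix evaluated at the indicator vector of $S$. Let $\mathbf 1 \in \mathbb R^n$ denote the all-ones vector and $\mathbf 1_S \in \mathbb R^n$ the indicator vector of $S$. First I would observe that $\mathbf 1_S^{\top} A \mathbf 1_S = \sum_{u,v \in S} A(u,v) = 2|E(S)|$, since $A$ has zero diagonal and each edge with both endpoints in $S$ contributes once for each ordering of its endpoints.

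Next I would split $\mathbf 1_S$ into its component along the top eigenvector and the rest. Since $G$ is $k$-regular we have $A\mathbf 1 = k\mathbf 1$, so write $\mathbf 1_S = \tfrac{|S|}{n}\,\mathbf 1 + \mathbf g$ with $\mathbf g \perp \mathbf 1$. As $A$ is symmetric and fixes the line spanned by $\mathbf 1$, it preserves $\mathbf 1^{\perp}$, hence the cross terms vanish and
$$2|E(S)| \;=\; \frac{|S|^2}{n^2}\,\mathbf 1^{\top} A \mathbf 1 \;+\; \mathbf g^{\top} A \mathbf g \;=\; \frac{k|S|^2}{n} \;+\; \mathbf g^{\top} A \mathbf g .$$
The key step will be to bound $\mathbf g^{\top} A \mathbf g$ using the spectrum: expanding $\mathbf g$ in an orthonormal eigenbasis of the restriction of $A$ to $\mathbf 1^{\perp}$, whose eigenvalues are $\lambda_2 \ge \dotsb \ge \lambda_n$, each coefficient is squared and hence nonnegative while each eigenvalue is at most $\lambda_2$, so $\mathbf g^{\top} A \mathbf g \le \lambda_2 \|\mathbf g\|_2^2$. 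This is the only step where the hypothesis on the spectrum enters. Finally I would compute $\|\mathbf g\|_2^2 = \|\mathbf 1_S\|_2^2 - \tfrac{|S|^2}{n} = |S| - \tfrac{|S|^2}{n} \le |S|$ and substitute, obtaining $2|E(S)| \le \tfrac{k|S|^2}{n} + \lambda_2|S| = k|S|\bigl(\tfrac{|S|}{n} + \widetilde\lambda_2(G)\bigr)$, as claimed.

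The argument is short and I do not expect a genuine obstacle; the two points needing a little care are the vanishing of the cross terms (which relies on $k$-regularity and symmetry of $A$) and the final substitution, where passing from $\|\mathbf g\|_2^2 = |S| - |S|^2/n$ to the crude bound $|S|$ uses $\widetilde\lambda_2(G) \ge 0$ (otherwise one should retain the sharper term $-|S|^2/n$, as in the two-sided version of the Expander Mixing Lemma).
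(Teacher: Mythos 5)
The paper does not actually prove this lemma --- it is quoted as a preliminary from Alon and Chung \cite{AC06} --- and your argument is precisely the standard spectral proof of that result, so there is no alternative route in the paper to compare against. Each step is correct: $\mathbf 1_S^{\top}A\mathbf 1_S = 2|E(S)|$, the decomposition $\mathbf 1_S = \tfrac{|S|}{n}\mathbf 1 + \mathbf g$ with the cross terms killed by $A\mathbf 1 = k\mathbf 1$ and symmetry, and $\mathbf g^{\top}A\mathbf g \le \lambda_2\|\mathbf g\|_2^2$ on $\mathbf 1^{\perp}$. The caveat you flag at the end is genuine and worth making explicit: the inequality exactly as printed does require $\lambda_2 \ge 0$, and without that it can fail --- for $G = K_n$ and $S = V$ one has $2|E(S)| = n(n-1)$ while the right-hand side equals $n(n-2)$, since $\widetilde\lambda_2 = -1/(n-1)$. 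What your computation establishes unconditionally is the sharper bound $2|E(S)| \le \tfrac{k|S|^2}{n} + \lambda_2\bigl(|S| - \tfrac{|S|^2}{n}\bigr)$, from which the printed form follows whenever $\lambda_2 \ge 0$; that hypothesis is harmless here (it holds for the large bounded-degree graphs to which the paper applies the lemma, and the sharper form suffices in any case for Lemmas \ref{large-cuts-lemma} and \ref{sum-of-coboundaries-lemma}). So your proof is correct, and your closing remark correctly locates the one missing hypothesis in the statement rather than in your argument.
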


The relation between combinatorial and spectral expansion of graphs is known as the Cheeger's inequality.
We state here one side of it, proven by Alon~\cite{A86}, which lets us bound the second largest eigenvalue of a graph by its Cheeger constant.
\begin{lem}[Cheeger's inequality]\label{cheeger-inequality}
For any $k$-regular graph $G$,
$$\widetilde \lambda_2(G) \le 1 - \frac{\widetilde h(G)^2}{2}.$$
\end{lem}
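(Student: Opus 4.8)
The plan is to prove the equivalent un-normalized bound $h(G)^2 \le 2k\,(k-\lambda_2)$; dividing by $k^2$ turns this into $\widetilde h(G)^2 \le 2(1-\widetilde\lambda_2)$, which rearranges to the claimed $\widetilde\lambda_2 \le 1 - \widetilde h(G)^2/2$. The starting point is the variational characterization of $\lambda_2$ through the combinatorial Laplacian $L = kI - A$. Since $G$ is $k$-regular we have $A\mathbf{1} = k\mathbf{1}$ and $\lambda_1 = k$ with eigenvector $\mathbf{1}$, so for every $f \perp \mathbf{1}$ one has $\langle f, Lf\rangle = \sum_{\{u,v\}\in E}(f(u)-f(v))^2$ and
$$k - \lambda_2 = \min_{\substack{f \perp \mathbf{1} \\ f \ne 0}} \frac{\sum_{\{u,v\}\in E}(f(u)-f(v))^2}{\sum_{v}f(v)^2} =: \min_{f \perp \mathbf{1}} R(f).$$
First I would fix an eigenvector $f$ attaining this minimum, so that $Lf = (k-\lambda_2)f$ and $R(f) = k - \lambda_2$, and then reduce it to a well-behaved test object.

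The reduction turns $f$ into a nonnegative function supported on at most half of the vertices without increasing its Rayleigh quotient. Let $m$ be a median of the values $\{f(v)\}_{v\in V}$ and set $g = f - m\mathbf{1}$, with positive and negative parts $g_+ = \max(g,0)$ and $g_- = \max(-g,0)$, so $g = g_+ - g_-$ with disjoint supports, each of size at most $n/2$ by the choice of median. The key pointwise estimate is that on the support of $g_+$ one has $g(v) = g_+(v)$ while $g(u) \le g_+(u)$ for its neighbours, which yields $\langle g_+, Lg_+\rangle \le \langle g_+, Lg\rangle$, and symmetrically for $g_-$. Using $Lg = Lf = (k-\lambda_2)(g + m\mathbf{1})$ together with $\langle f, \mathbf{1}\rangle = 0$, the cross term $m(\sum g_+ - \sum g_-) = -m^2 n \le 0$ cancels favourably, giving
$$\langle g_+, Lg_+\rangle + \langle g_-, Lg_-\rangle \le (k-\lambda_2)\bigl(\|g_+\|^2 + \|g_-\|^2\bigr).$$
Hence at least one of $g_+, g_-$ has Rayleigh quotient at most $k - \lambda_2$; call it $g$, a nonnegative function with $R(g) \le k - \lambda_2$ and $|\mathrm{supp}(g)| \le n/2$.

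The final step is the sweep (threshold rounding) that extracts a sparse cut from $g$, and I expect this to be the main obstacle, since it is where the square-root loss appears and where $k$-regularity is essential. Writing $S_t = \{v : g(v)^2 > t\}$ for $t \ge 0$, the coarea-type identities $\int_0^\infty |E(S_t,\bar S_t)|\,dt = \sum_{\{u,v\}\in E}|g(u)^2 - g(v)^2|$ and $\int_0^\infty |S_t|\,dt = \sum_v g(v)^2$ hold. I would then apply Cauchy--Schwarz to $|g(u)^2-g(v)^2| = |g(u)-g(v)|\,(g(u)+g(v))$ and use $k$-regularity in the form $\sum_{\{u,v\}\in E}(g(u)+g(v))^2 \le 2\sum_{\{u,v\}\in E}(g(u)^2+g(v)^2) = 2k\sum_v g(v)^2$ to obtain
$$\int_0^\infty |E(S_t,\bar S_t)|\,dt = \sum_{\{u,v\}\in E}|g(u)^2-g(v)^2| \le \sqrt{2k\,R(g)}\;\sum_v g(v)^2 = \sqrt{2k\,R(g)}\int_0^\infty |S_t|\,dt.$$
Averaging over $t$ produces a threshold $t^*$ with $|E(S_{t^*},\bar S_{t^*})|/|S_{t^*}| \le \sqrt{2k\,R(g)}$. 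Since $|\mathrm{supp}(g)| \le n/2$, every level set $S_t$ has at most $n/2$ vertices, so the minimum defining $h(G)$ is bounded by this quantity, giving $h(G) \le \sqrt{2k\,R(g)} \le \sqrt{2k(k-\lambda_2)}$ and hence $h(G)^2 \le 2k(k-\lambda_2)$, which is exactly the bound that rearranges to the statement.
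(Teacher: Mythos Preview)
Your argument is correct and complete: it is the standard proof of the ``hard'' direction of the discrete Cheeger inequality, via the variational formula for $k-\lambda_2$, the median shift that produces a nonnegative test function $g$ with support of size at most $n/2$ and Rayleigh quotient at most $k-\lambda_2$, and the sweep/coarea bound combined with Cauchy--Schwarz and $k$-regularity to extract a cut of conductance at most $\sqrt{2k(k-\lambda_2)}$. The only step worth tightening in exposition is the cross-term computation: from $\langle g_+,Lg_+\rangle+\langle g_-,Lg_-\rangle\le\langle g,Lg\rangle$ and $Lg=(k-\lambda_2)(g+m\mathbf{1})$ one gets $(k-\lambda_2)(\|g\|^2-m^2n)\le(k-\lambda_2)\|g\|^2$, which is exactly what you use.

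There is, however, nothing to compare against in the paper itself: the lemma is stated in the preliminaries with a citation to Alon and no proof is given. You have supplied a full self-contained argument where the paper simply quotes the result.
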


In order to bound the smallest eigenvalue of a graph we use a Cheeger-type inequality, proven by Trevisan~\cite{T16}, which defines a combinatorial measure of how far is a graph from having a bipartite component and relates the smallest eigenvalue of the graph to this measure.
We state here a specific case of the result proven in~\cite{KM16}, which uses this inequality, in order to bound the smallest eigenvalue of the edge-graph.
\begin{lem}[Cheeger-type inequality for edge-graph]\cite[Lemma~5.2]{KM16}\label{cheeger-type-inequality}
For any $2$-dimensional $(k_0, k_1)$-regular simplicial complex $X$,
$$\widetilde \lambda_n(G_1(X)) \ge -\frac{17}{18}.$$
(it is actually enough for the complex to be regular only in the dimension of its edges, i.e., every edge is contained in the same number of triangles.
We write $(k_0, k_1)$-regular in order to avoid confusing notations).
\end{lem}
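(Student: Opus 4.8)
The plan is to bound the smallest eigenvalue of $G_1(X)$ from below through a Cheeger-type inequality for $\widetilde\lambda_n$, namely Trevisan's inequality. For a $d$-regular graph $G$, let
$$\beta(G) = \min_{\substack{L,R \subseteq V(G),\ L \cap R = \emptyset \\ L \cup R \neq \emptyset}} \frac{2|E_G(L)| + 2|E_G(R)| + |E_G(L \cup R, \overline{L \cup R})|}{d\,|L \cup R|}$$
be its bipartiteness ratio, so that $1 + \widetilde\lambda_n(G) \ge \tfrac{1}{2}\,\beta(G)^2$. It then suffices to prove $\beta(G_1(X)) \ge \rfrac{1}{3}$, since this yields $\widetilde\lambda_n(G_1(X)) \ge -1 + \rfrac{1}{18} = -\rfrac{17}{18}$.

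First I would record the relevant structure of $G_1(X)$. Since every edge of $X$ lies in exactly $k_1$ triangles and each triangle $t$ of $X$ contributes precisely the three $G_1(X)$-edges between its three constituent edges, the graph $G_1(X)$ is $2k_1$-regular and its edge set is the disjoint union, over the $2$-faces $t \in T$, of these triples; the disjointness is what matters, and it holds because two edges of $X$ lie in at most one common triangle. Consequently, any cost assigned additively to the edges of $G_1(X)$ splits as a sum of per-face contributions.

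Next, fix admissible $L, R \subseteq E = V(G_1)$ and put $W = L \cup R$. For each $2$-face $t$ with edges $e_1,e_2,e_3$, I would bound its contribution to the numerator of $\beta(L,R)$ by a short case analysis on $|\{i : e_i \in W\}|$: if all three lie in $W$, then the odd cycle $e_1 e_2 e_3$ in $G_1(X)$ must have a monochromatic edge (contribution at least $2$); if one or two lie in $W$, then at least two of the three $G_1(X)$-edges of $t$ cross between $W$ and $\overline{W}$ (contribution at least $1+1$). Hence every $2$-face incident to $W$ contributes at least $2$. Double counting the pairs $(e,t)$ with $e \in W$ an edge of the triangle $t$, there are exactly $k_1|W|$ of them and at most $3$ per face, so at least $k_1|W|/3$ faces are incident to $W$; thus the numerator of $\beta(L,R)$ is at least $2k_1|W|/3$ while its denominator equals $2k_1|W|$, giving $\beta(L,R) \ge \rfrac{1}{3}$. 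Taking the minimum over $(L,R)$ and invoking Trevisan's inequality finishes the proof.

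The step I expect to be the crux is handling the volume normalization in $\beta$: the naive observation that each triangle of $X$ forces at least one bad $G_1(X)$-edge only controls the \emph{global} fraction of bad edges and gives nothing when $W$ is small. The remedy is to charge bad edges only to faces meeting $W$ and to set the incidence count $k_1|W|$ against the factor $3$ by which a face can be over-counted, together with the weight-$2$ accounting of monochromatic edges built into $\beta$; these are exactly what turn the per-face estimate into the clean constant $\rfrac{1}{3}$. A minor preliminary is to state Trevisan's inequality in the precise bipartiteness-ratio form above, including the constant $\rfrac{1}{2}$ in $1 + \widetilde\lambda_n \ge \tfrac{1}{2}\beta^2$.
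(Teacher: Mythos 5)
Your proof is correct, and it follows the same route the paper attributes to this lemma: the paper itself does not prove the statement but imports it from \cite[Lemma~5.2]{KM16}, explicitly noting that the cited proof goes through Trevisan's Cheeger-type inequality for $\widetilde\lambda_n$. Your derivation --- decomposing $E_{G_1}$ into disjoint triangle-triples, showing each $2$-face meeting $W=L\cup R$ contributes at least $2$ to the bipartiteness numerator, and concluding $\beta(G_1(X))\ge \rfrac{1}{3}$ so that $\widetilde\lambda_n \ge -1+\tfrac{1}{2}\cdot\tfrac{1}{9} = -\tfrac{17}{18}$ --- reproduces exactly the constant of the cited result and is a complete, self-contained argument.
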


\section{Main Theorem}
In this section we prove our main theorem, which is the following.

\begin{thm}\label{main-thm}
Let $X=(V,E,T)$ be a $2$-dimensional $(k_0, k_1)$-regular $(\epsilon, \mu)$-cosystolic expander.
If $\widetilde \lambda_2 = \widetilde \lambda_2(G_0(X)) < \rfrac{1}{2}$, then the high order random walk on the edges of $X$ is $\alpha$-rapidly mixing for
$$\alpha = 1 - \frac{\epsilon^2}{128}\left(3\sqrt{(1+2\widetilde\lambda_2)^2 + 32} - 2\widetilde\lambda_2 - 17 \right)^{\!\!2}.$$
\end{thm}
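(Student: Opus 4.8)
The plan is to reduce the problem to showing that the edge-graph $G_1(X)$ is a spectral expander, and then invoke Theorem~\ref{rapid-mixing-thm}. Since Lemma~\ref{cheeger-type-inequality} already controls the smallest eigenvalue, $\widetilde\lambda_n(G_1(X)) \ge -17/18$, the whole difficulty is to bound $\widetilde\lambda_2(G_1(X))$ away from $1$. By Cheeger's inequality (Lemma~\ref{cheeger-inequality}) it suffices to show that $\widetilde h(G_1(X))$ is bounded below by some explicit function of $\epsilon$ and $\widetilde\lambda_2(G_0(X))$; tracing the quadratic $\widetilde\lambda_2 \le 1 - \widetilde h^2/2$ and then $\alpha = \widetilde\lambda = \max\{\widetilde\lambda_2, 17/18\}$ through should reproduce the stated value of $\alpha$.

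The heart of the argument is therefore a lower bound on the combinatorial expansion of $G_1(X)$, i.e.\ on $|E_{G_1}(S,\bar S)|/|S|$ for every vertex set $S$ in the edge-graph of size at most half. Identify $S$ with the corresponding set of edges $F \subseteq E$ of the complex. The first step is to express $|E_{G_1}(S,\bar S)|$ locally: an edge of $G_1$ joining $F$ to its complement is a pair of edges $e \in F$, $f \notin F$ that lie in a common triangle; such a triangle has exactly one or exactly three edges in $F$. Counting triangles of $X$ by how many of their edges lie in $F$, and splitting each triangle's contribution among its three vertices, one sees that $|E_{G_1}(S,\bar S)|$ is, up to constants, a sum over vertices $v \in V$ of a quantity measuring how ``unbalanced'' the local view $F_v$ (the edges of $F$ at $v$) is inside the link of $v$. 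Concretely I would relate, for each $v$, the number of bichromatic pairs of edges at $v$ sharing a triangle to the coboundary-type expansion of $F_v$ viewed as a subset of the link graph of $v$. The point is that the link of every vertex in a cosystolic (or coboundary) expander is itself a good graph expander — this is a standard local-to-global fact — so each $F_v$ that is neither close to empty nor close to full contributes proportionally to $|F_v|$ many outgoing edges.

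The subtlety, flagged in the introduction, is that a local view $F_v$ which is very large (close to the whole link of $v$) need not expand: a nearly-complete set in the link has few ``outgoing'' pairs. To control this I would use the cosystolic expansion of $X$ in dimension $1$ globally: if $F$ itself were far from $Z^1(X)$, then $|\delta(F)|$, the number of triangles with an odd number of edges in $F$, is large, and each such triangle again forces a bichromatic incident pair, feeding back into $|E_{G_1}(S,\bar S)|$. The remaining case is that $F$ is close to a $1$-cocycle, or is a very large $1$-cocycle; in the latter case $|F|$ itself would be $\Omega(|E|)$, and then the global structure — specifically the spectral gap $\widetilde\lambda_2(G_0(X)) < 1/2$ together with the Expander Mixing Lemma (Lemma~\ref{expander-mixing-lemma}) applied to $G_0(X)$ — is used to show that a large set of edges cannot have most of its local views be large: if many vertices had a large local view then $F$ would contain too many edges relative to what the mixing lemma permits, forcing instead that a constant fraction of the local views are of moderate size, hence expanding. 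I would make this quantitative by choosing a threshold on $|F_v|/k_0$ and optimizing it; this threshold optimization is presumably where the specific constant $\sqrt{(1+2\widetilde\lambda_2)^2+32}$ comes from.

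The main obstacle I expect is exactly this last balancing act: combining the three sources of outgoing edges (odd triangles from cosystolic expansion of $F$, expanding moderate local views from link expansion, and the mixing-lemma bound that forbids too many simultaneously-large local views) into a single clean lower bound on $\widetilde h(G_1(X))$, and then carrying the constants through Cheeger's inequality and Lemma~\ref{cheeger-type-inequality} to land on the stated $\alpha$. I would first prove a qualitative version — $G_1(X)$ is an $\epsilon'$-combinatorial expander for some $\epsilon' = \epsilon'(\epsilon,\widetilde\lambda_2) > 0$ whenever $\widetilde\lambda_2 < 1/2$ — and only afterwards optimize the threshold to extract the explicit bound.
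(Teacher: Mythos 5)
Your high-level architecture matches the paper's: reduce to spectral expansion of $G_1(X)$, bound $\widetilde\lambda_n(G_1(X))$ by Lemma~\ref{cheeger-type-inequality}, bound $\widetilde\lambda_2(G_1(X))$ via Cheeger's inequality from a lower bound on $\widetilde h(G_1(X))$, decompose $|E_{G_1}(S,\bar S)|$ vertex-by-vertex into local contributions, use the Expander Mixing Lemma on $G_0(X)$ to limit the number of vertices with overly large local views, and optimize a fatness threshold. But the engine you propose for the local lower bound is wrong, and it is exactly the step the paper is about. You assert that ``the link of every vertex in a cosystolic (or coboundary) expander is itself a good graph expander --- this is a standard local-to-global fact.'' It is not: cosystolic expansion is a global property and does not imply expansion of the individual links (the known implications run in the opposite direction, from link expansion to cosystolic expansion), and the paper explicitly advertises that its hypothesis ``might hold even without the expansion of every piece of the complex.'' So your main source of outgoing edges --- link expansion applied to each moderate $F_v$ --- is not available from the hypotheses.

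What the paper does instead is apply the global $1$-dimensional cosystolic expansion of $X$ directly to the local view $F_v$, regarded as a subset of $E$. The decomposition (Lemma~\ref{outgoing-edges-lemma}) is exact, not ``up to constants'': each edge of $E_{G_1}(S,\bar S)$ is charged to the unique common vertex of the two $X$-edges it joins, and the triangles of $X$ with an odd number of edges in $F_v$ are precisely the triangles at $v$ with exactly one of their two $v$-edges in $F$, so $|E_{G_1}(S,\bar S)|=\sum_{v}|\delta(F_v)|$ on the nose. (Relatedly, a triangle containing both an edge of $F$ and an edge of $\bar F$ has one or \emph{two} edges in $F$, not ``one or three.'') Cosystolic expansion then gives $|\delta(F_v)|\ge \epsilon k_1\,\dist(F_v,Z^1(X))$, and the real work is showing $\dist(F_v,Z^1(X))=\min\{|F_v|,\,k_0-|F_v|\}$: one needs (a) every nontrivial cut of $G_0(X)$ to have at least $k_0$ edges, which is where $\widetilde\lambda_2(G_0(X))<1/2$ and the Mixing Lemma first enter, and (b) every nontrivial $1$-cocycle to have at least $\tfrac{3}{2}k_0$ edges, which is where the parameter $\mu$ enters --- a role your proposal never assigns to $\mu$. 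Your auxiliary idea of also harvesting $|\delta(F)|$ for the global set $F$ is not needed and does not obviously help (a triangle with all three edges in $F$ lies in $\delta(F)$ but contributes no bichromatic pair). The mixing-lemma/threshold part of your plan does track the paper's Lemma~\ref{sum-of-coboundaries-lemma}, but without the correct local lemma the argument does not go through.
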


\subsection{Proof sketch}
Since the high order random walk on the edges of $X$ is equivalent to a random walk on the edge-graph $G_1(X)$, it is sufficient to show that the edge-graph is a spectral expander.
For that we need to bound both $\widetilde \lambda_2(G_1(X))$ and $\widetilde \lambda_n(G_1(X))$.
The bound on $\widetilde \lambda_n(G_1(X))$ is obtained by the structure of the edge-graph, which is far from having a bipartite component.
The bound on $\widetilde \lambda_2(G_1(X))$ is derived from the combinatorial expansion of the edge-graph, which is the main focus of our proof.

In order to prove that the edge-graph is a combinatorial expander we need to show that every non-empty subset of vertices of size at most half of the vertices has many outgoing edges.
At Lemma \ref{outgoing-edges-lemma} we prove an equivalence between the outgoing edges of any subset of vertices in the edge-graph and the coboundaries of all the local views with regard to the corresponding edges in the complex.
Thus, it is enough to show that for every subset of edges in the complex, there are many local views for which their coboundary is large.

For any subset of edges in the complex, we split the vertices of the complex to three subsets according to their local view's size:
Vertices with a very large local view are called fat, vertices with a large local view but not too large are called semi-fat, and vertices with a small local view are called non-fat.
At Lemma \ref{local-view-coboundary-lemma} we provide a lower bound on the coboundary of the local view of every semi-fat and non-fat vertex.
This lemma follows from the cosystolic expansion of $X$ and the spectral expansion of the underlying graph of the complex $G_0(X)$;
the cosystolic expansion of the complex guarantees a large coboundary relative to the distance from the non-expanding sets, and the spectral expansion of the underlying graph provides a large distance from the non-expanding sets.
Then we split to two cases:
If there are many semi-fat vertices, then we immediately get enough local views for which their coboundary is large.
Otherwise, almost all of the vertices are fat.
By the spectral expansion of $G_0(X)$ we know that there are not many edges for which both of their endpoints are fat vertices.
Thus, there must be many edges touching non-fat vertices, which implies that many non-fat vertices have a large local view.
Then again we conclude that many local views have a large coboundary.

\subsection{Outgoing edges and coboundaries of local views}
First let us define the local view of a vertex in a formal way.
\begin{dfn}[Local view]
Let $X=(V,E,T)$ be a $2$-dimensional simplicial complex.
For any subset of edges $F \subseteq E$ and any vertex $v \in V$, the {\em local view} of $v$ with regard to $F$ is defined as
$$F_v = \{e \in F \;|\; v \in e\}.$$
\end{dfn}

We start with the following lemma, which provides a way to count the outgoing edges of any subset of vertices in the edge-graph by summing over the coboundaries of all the local views with regard to the corresponding edges in the complex.

\begin{lem}\label{outgoing-edges-lemma}
Let $X=(V,E,T)$ be a $2$-dimensional simplicial complex, and let $G_1 = G_1(X) = (V_{G_1}, E_{G_1})$ be its edge-graph.
For any subset of vertices in the edge-graph, $S \subseteq V_{G_1}$, denote by $F \subseteq E$ the corresponding subset of edges in $X$. Then
$$|E_{G_1}(S, \bar S)| = \sum_{v \in V}|\delta(F_v)|.$$
\end{lem}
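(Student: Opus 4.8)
The plan is to count the set $E_{G_1}(S,\bar S)$ by organizing its elements according to the vertex of $X$ that is common to the two edges forming each such $G_1$-edge. Recall that a vertex of $G_1$ is an edge of $X$, and two vertices $e, f$ of $G_1$ are adjacent exactly when $e \cup f \in T$, i.e. when $e$ and $f$ share a common triangle. Two edges of a triangle always share exactly one vertex, so every edge $\{e,f\}$ of $G_1$ determines a unique vertex $v = e \cap f \in V$ and a unique triangle $t = e \cup f \in T$ containing both $e$ and $f$. Conversely, for a fixed vertex $v \in V$ and a fixed triangle $t = \{v, a, b\} \in T$ incident to $v$, the two edges of $t$ through $v$ are $e = \{v,a\}$ and $f = \{v,b\}$, and $\{e,f\}$ is an edge of $G_1$. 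Thus I would first set up the bijection between edges of $G_1$ and pairs $(v, t)$ with $v \in t \in T$, taking $(v,t)$ to the pair of edges of $t$ incident to $v$.

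Next I would restrict this bijection to the cut $E_{G_1}(S,\bar S)$. An edge $\{e,f\}$ of $G_1$ lies in the cut iff exactly one of $e, f$ belongs to $S$, equivalently exactly one belongs to $F$ (since $F \subseteq E$ is the subset corresponding to $S \subseteq V_{G_1}$). Under the correspondence above, for a triangle $t = \{v,a,b\}$ with edges through $v$ equal to $\{v,a\}$ and $\{v,b\}$, this says exactly one of $\{v,a\}, \{v,b\}$ lies in $F$. I would then observe that the third edge $\{a,b\}$ of $t$ is irrelevant to this condition, and rephrase: the pair $(v,t)$ contributes to the cut iff $F(\{v,a\}) + F(\{v,b\}) = 1 \pmod 2$. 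Summing over all triangles $t \ni v$, for a fixed $v$ the number of such contributing triangles is precisely the number of triangles $t \ni v$ having an odd number of their $v$-incident edges in $F$ — but since we may freely adjust by the third edge, this count equals $|\delta(F_v)|$, the coboundary of the local view $F_v = \{e \in F : v \in e\}$, computed inside the complex. Concretely, a triangle $\{v,a,b\}$ has an odd number of edges in $F_v$ (namely among $\{v,a\}, \{v,b\}$, since $\{a,b\} \notin F_v$) iff it has an odd number of its $v$-incident edges in $F$, which is the condition above; so the triangles contributing at $v$ are exactly those in $\delta(F_v)$.

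Finally, because the vertex $v = e \cap f$ assigned to each cut edge $\{e,f\}$ is unique, the classes $\{(v,t) : v \in t \in T,\ \text{$t$ contributes}\}$ over distinct $v \in V$ partition $E_{G_1}(S,\bar S)$, so summing the per-vertex counts gives $|E_{G_1}(S,\bar S)| = \sum_{v \in V} |\delta(F_v)|$, as claimed. The main point requiring care — and the step I expect to be the only real obstacle — is the identification of "triangles $t \ni v$ with an odd number of $v$-incident edges in $F$" with "$\delta(F_v)$": one must be careful that $\delta(F_v)$ is defined as triangles with an odd number of \emph{all three} edges in $F_v$, and argue that since no triangle has its non-$v$-incident edge in $F_v$ by definition of the local view, the parity of all three edges coincides with the parity of the two $v$-incident edges. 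Once that is pinned down, the rest is bookkeeping over the bijection between cut edges of $G_1$ and contributing vertex–triangle pairs.
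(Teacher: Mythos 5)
Your proposal is correct and follows essentially the same route as the paper: both decompose the cut edges of $G_1$ according to the unique shared vertex $v = e \cap f$, and both identify the contributing pairs at each $v$ with $\delta(F_v)$ via the key observation that the third edge of the triangle is never in the local view, so the parity over all three edges reduces to the parity over the two $v$-incident ones.
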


\begin{proof}
First recall that every vertex $v_{G_1} \in V_{G_1}$ corresponds to an edge in $X$, and thus it is of the form $v_{G_1} = \{u,v\} \in E$, where $u,v \in V$.
Next note that every edge $\{u_{G_1},v_{G_1}\} \in E_{G_1}$ satisfies $u_{G_1} \cap v_{G_1} = v \in V$.
For any vertex $v \in V$, define $E_{G_1}^v = \{ \{u_{G_1},v_{G_1}\} \in E_{G_1} \;|\; u_{G_1} \cap v_{G_1} = v \}$.
Then note that $E_{G_1}$ can be decomposed into the following disjoint union.
\begin{equation}\label{outgoing-edges-eq1}
E_{G_1} = \bigcupdot_{v \in V}E_{G_1}^v.
\end{equation}

Let $S \subseteq V_{G_1}$ be a subset of vertices in $G_1$, and denote by $F \subseteq E$ the corresponding edges in $X$.
By (\ref{outgoing-edges-eq1}) the following holds.
\begin{equation}\label{outgoing-edges-eq2}
|E_{G_1}(S, \bar S)| = \sum_{v \in V}|E_{G_1}^v(S, \bar S)|,
\end{equation}
where $E_{G_1}^v(S, \bar S) = E_{G_1}(S, \bar S) \cap E_{G_1}^v$.
Thus, it is enough to show that $|E_{G_1}^v(S, \bar S)| = |\delta(F_v)|$ for every vertex $v \in V$.

Let $v \in V$.
Note that every edge $\{u_{G_1},v_{G_1}\} \in E_{G_1}^v(S, \bar S)$ satisfies $S(u_{G_1}) + S(v_{G_1}) = 1$.
Denote $u_{G_1} = \{u,v\} \in E$ and $v_{G_1} = \{v,w\} \in E$, so the induced triangle by $u_{G_1}$ and $v_{G_1}$ is $\{u,v,w\} \in T$.
We claim that $\{u,v,w\} \in \delta(F_v)$.
The reason is that $F_v(\{u,v\}) + F_v(\{v,w\}) = S(u_{G_1}) + S(v_{G_1}) = 1$, and $F_v(\{u,w\}) = 0$ since $v \notin \{u,w\}$.
Thus,
\begin{equation}\label{outgoing-edges-eq3}
|E_{G_1}^v(S, \bar S)| \le |\delta(F_v)|.
\end{equation}

Similarly, consider a triangle $\{u,v,w\} \in \delta(F_v)$.
Since $F_v(\{u,w\}) = 0$, it must hold that $F_v(\{u,v\}) + F_v(\{v,w\}) = 1$.
Denote $\{u,v\} = u_{G_1} \in V_{G_1}$ and $\{v,w\} = v_{G_1} \in V_{G_1}$, so the induced edge in $G_1$ is $\{u_{G_1},v_{G_1}\} \in E_{G_1}^v$.
We claim that $\{u_{G_1},v_{G_1}\} \in E_{G_1}^v(S, \bar S)$.
The reason is that $S(u_{G_1}) + S(v_{G_1}) = F_v(\{u,v\}) + F_v(\{v,w\}) = 1$.
Thus,
\begin{equation}\label{outgoing-edges-eq4}
|\delta(F_v)| \le |E_{G_1}^v(S, \bar S)|.
\end{equation}

By (\ref{outgoing-edges-eq3}) and (\ref{outgoing-edges-eq4}) we conclude that $|E_{G_1}^v(S, \bar S)| = |\delta(F_v)|$ for every vertex $v \in V$.
Then applying it to (\ref{outgoing-edges-eq2}) finishes the proof.
\end{proof}

\subsection{Local views of semi-fat and non-fat vertices}
The purpose of this subsection is to prove that the local views of semi-fat and non-fat vertices expand well, i.e., that the coboundary of their local view is large.
First, let us define these vertices formally.

\begin{dfn}[Fat, semi-fat, and non-fat vertices]
Let $X=(V,E,T)$ be a $2$-dimensional $(k_0, k_1)$-regular simplicial complex,
let $\rfrac{1}{2} < \eta < 1$ be a fatness constant,
and let $F \subseteq E$ be a subset of edges.
For any vertex $v \in V$,
\begin{itemize}
  \item If $|F_v| > \eta k_0$ then $v$ is called a {\em fat} vertex.
  \item If $\frac{k_0}{2} < |F_v| \le \eta k_0$ then $v$ is called a {\em semi-fat} vertex.
  \item If $|F_v| \le \frac{k_0}{2}$ then $v$ is called a {\em non-fat} vertex.
\end{itemize}
\end{dfn}

Now we need the following lemmas, which let us evaluate the distance of the local view of any vertex in the complex from the cocycles of the complex.
We prove the following lemmas asymptotically, i.e., for $|V|$ large enough.

\begin{lem}[Large cuts]\label{large-cuts-lemma}
Let $G=(V,E)$ be a $k$-regular graph.
If $\widetilde \lambda_2 = \widetilde \lambda_2(G) < \rfrac{1}{2}$, then for any subset of vertices $\emptyset \ne S \subsetneq V$, $|E(S, \bar S)| \ge k$.
\end{lem}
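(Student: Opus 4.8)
The plan is to use the Expander Mixing Lemma (Lemma~\ref{expander-mixing-lemma}) to lower bound the number of edges crossing the cut $(S,\bar S)$, and then observe that since the number of crossing edges is an integer, a lower bound strictly greater than $k-1$ already forces it to be at least $k$. Let $s = |S|$ and $n = |V|$, and write $\beta = |S|/|V| = s/n$. By symmetry we may assume $s \le n/2$, so $\beta \le 1/2$.

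First I would count edges inside $S$ and inside $\bar S$ via the Expander Mixing Lemma. Applying Lemma~\ref{expander-mixing-lemma} to $S$ gives $2|E(S)| \le ks(\beta + \widetilde\lambda_2)$, and applying it to $\bar S$ gives $2|E(\bar S)| \le k(n-s)(1-\beta + \widetilde\lambda_2)$. Since every edge of $G$ is either inside $S$, inside $\bar S$, or crossing, we have $kn = 2|E| = 2|E(S)| + 2|E(\bar S)| + 2|E(S,\bar S)|$, hence
\begin{equation}
2|E(S,\bar S)| = kn - 2|E(S)| - 2|E(\bar S)| \ge kn - ks(\beta + \widetilde\lambda_2) - k(n-s)(1-\beta + \widetilde\lambda_2).
\end{equation}
Expanding the right-hand side: $kn - ks\beta - ks\widetilde\lambda_2 - k(n-s) + k(n-s)\beta - k(n-s)\widetilde\lambda_2 = k(n-s)\beta - ks\beta + ks - k\widetilde\lambda_2 n = ks - k\beta(2s - n) - k\widetilde\lambda_2 n$. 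Using $s = \beta n$, this equals $k\beta n - k\beta(2\beta n - n) - k\widetilde\lambda_2 n = kn(\beta - 2\beta^2 + \beta - \widetilde\lambda_2) = kn(2\beta - 2\beta^2 - \widetilde\lambda_2)$. So $|E(S,\bar S)| \ge \tfrac{kn}{2}(2\beta(1-\beta) - \widetilde\lambda_2)$.

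The remaining step is to turn this into the bound $|E(S,\bar S)| \ge k$. The function $\beta \mapsto 2\beta(1-\beta)$ is increasing on $[0,1/2]$, so its behavior is governed by how small $s$ can be. The genuinely small cases $s = 1, 2, \dots$ must be handled separately: if $s$ is a small constant, then $S$ has exactly $ks - 2|E(S)|$ crossing edges, and I would argue directly — e.g. for $s=1$, $|E(S,\bar S)| = k \ge k$ trivially; for $2 \le s \le$ some constant, the Expander Mixing Lemma bound $2|E(S)| \le ks(\beta + \widetilde\lambda_2)$ with $\beta \to 0$ and $\widetilde\lambda_2 < 1/2$ gives $|E(S)| < ks/4 + o(ks)$, so $|E(S,\bar S)| = ks - 2|E(S)| > ks/2 - o(ks) \ge k$ for $n$ large. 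For $s$ growing with $n$ (say $s \ge$ that constant but $s \le n/2$), the bound $|E(S,\bar S)| \ge \tfrac{kn}{2}(2\beta(1-\beta) - \widetilde\lambda_2) \ge \tfrac{kn}{2} \cdot 2\beta(1 - \beta - \tfrac{\widetilde\lambda_2}{2\beta})$; since $\widetilde\lambda_2 < 1/2$ and $\beta$ is bounded below by a constant times $s/n$ which is $\gg 1/n$, for $n$ large enough this is $\ge kn \cdot c\beta = cks \gg k$. The main obstacle is bookkeeping the threshold between ``$s$ constant'' and ``$s$ linear'' carefully enough that the asymptotic statement ``for $|V|$ large enough'' genuinely covers all $\emptyset \ne S \subsetneq V$; the cleanest route is to note that whenever $s \ge 2$ the Expander Mixing Lemma yields $|E(S,\bar S)| \ge ks(1 - \beta - \widetilde\lambda_2) \ge ks(\tfrac12 - \tfrac1n) > k$ once $s \ge 2$ and $n$ is large, while $s = 1$ is immediate, so no delicate case split is actually needed.
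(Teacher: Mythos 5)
Your overall strategy is the same as the paper's: combine the regularity identity $k|S| = 2|E(S)| + |E(S,\bar S)|$ with the Expander Mixing Lemma to get $|E(S,\bar S)| \ge k|S|\bigl(1 - \tfrac{|S|}{|V|} - \widetilde\lambda_2\bigr)$, treat $|S|=1$ separately, and invoke ``$|V|$ large enough.'' However, the quantitative steps you use to finish are incorrect as written, in three places. First, at the critical size $|S|=2$ your bound $|E(S,\bar S)| > k|S|/2 - o(k|S|) = k - o(k)$ does \emph{not} imply $\ge k$; you discarded the strict gap $1-\widetilde\lambda_2 > \tfrac12$ too early. Keeping it gives $|E(S,\bar S)| \ge 2k\bigl(1 - \tfrac{2}{n} - \widetilde\lambda_2\bigr) = k + k(1-2\widetilde\lambda_2) - \tfrac{4k}{n} \ge k$ once $n \ge 4/(1-2\widetilde\lambda_2)$, which is the margin you actually need. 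Second, the two-sided bound $|E(S,\bar S)| \ge \tfrac{kn}{2}\bigl(2\beta(1-\beta) - \widetilde\lambda_2\bigr)$ that you use for ``$s$ growing with $n$'' is \emph{negative} whenever $2\beta(1-\beta) < \widetilde\lambda_2$, e.g.\ for $s=\sqrt{n}$, or for $s=\epsilon n$ with $\epsilon$ small and $\widetilde\lambda_2$ near $\tfrac12$; so your case split leaves all intermediate sizes uncovered, and the claim that this quantity is $\ge cks$ is false there. Third, your ``cleanest route'' inequality $1-\beta-\widetilde\lambda_2 \ge \tfrac12 - \tfrac1n$ fails for $\beta$ near $\tfrac12$ (take $\beta=\tfrac12$ and $\widetilde\lambda_2=0.4$: the left side is $0.1$).

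The missing idea is how to cover the whole range $2 \le |S| \le n/2$ uniformly. The one-sided bound reads $|E(S,\bar S)| \ge k\,f(|S|)$ with $f(x) = x\bigl(1 - \tfrac{x}{n} - \widetilde\lambda_2\bigr)$, and $f$ is \emph{concave}, so its minimum over $[2,\,n/2]$ is attained at an endpoint; checking $f(2)\ge 1$ and $f(n/2) = \tfrac{n}{4}(1-2\widetilde\lambda_2) \ge 1$ under $n \ge 4/(1-2\widetilde\lambda_2)$ finishes the proof --- this is exactly what the paper does. Your underlying intuition (small sets have $\beta \approx 0$ so the factor exceeds $\tfrac12$; large sets tolerate a factor as small as $\tfrac12 - \widetilde\lambda_2$ because $|S|$ is large) is correct, but it must be implemented with the one-sided bound throughout and glued together either by concavity or by an explicit threshold at $|S| = 2/(1-2\widetilde\lambda_2)$, rather than by the $o(\cdot)$ bookkeeping and the two-sided bound you propose.
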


\begin{proof}
Let $\emptyset \ne S \subsetneq V$.
If $|S| = 1$ then $|E(S, \bar S)| = k$ and the proof is done.
So assume that $|S| \ge 2$.
We can also assume that $|S| \le \rfrac{|V|}{2}$, otherwise we replace $S$ with $\bar S$.
Since $G$ is $k$-regular, the following holds.
\begin{equation}\label{large-cuts-eq-1}
k|S| = 2|E(S)| + |E(S, \bar S)|.
\end{equation}

By the Expander Mixing Lemma (Lemma~\ref{expander-mixing-lemma}),
\begin{equation}\label{large-cuts-eq-2}
2|E(S)| \le k|S|\biggr(\frac{|S|}{|V|} + \widetilde \lambda_2\biggr).
\end{equation}

Combining both (\ref{large-cuts-eq-1}) and (\ref{large-cuts-eq-2}) yields the following lower bound on $|E(S, \bar S)|$.
\begin{equation}\nonumber
|E(S, \bar S)| \ge k|S|\biggr(1 - \frac{|S|}{|V|} - \widetilde \lambda_2\biggr).
\end{equation}

Define $f(|S|) = |S|(1 - \frac{|S|}{|V|} - \widetilde \lambda_2)$.
The claim holds if $f(|S|) \ge 1$ for every $2 \le |S| \le \rfrac{|V|}{2}$.
We show that it holds for $|V|$ large enough.
In particular, assume that $|V| \ge \frac{4}{1-2\widetilde\lambda_2}$.
Since $f$ is a concave function it is enough to check its extreme values.
Indeed, for $|S| = 2$,
\begin{equation}\nonumber
f(2) = 2\biggr(1 - \frac{2}{|V|} - \widetilde \lambda_2\biggr) \ge 2\biggr(1 - \frac{1-2\widetilde\lambda_2}{2} - \widetilde\lambda_2\biggr) = 1,
\end{equation}
and for the other extreme value,
\begin{equation}\nonumber
f\biggr(\frac{|V|}{2}\biggr) = \frac{|V|}{2}\biggr(1 - \frac{1}{2} - \widetilde \lambda_2\biggr) = |V|\biggr(\frac{1-2\widetilde\lambda_2}{4}\biggr) \ge 1.
\end{equation}
\end{proof}

\begin{lem}[Large non-trivial non-expanding sets]\label{distance-from-cosystoles-lemma}
Let $X=(V,E,T)$ be a $2$-dimensional $(k_0, k_1)$-regular $(\epsilon, \mu)$-cosystolic expander.
For any subset of edges $F \subseteq E$ and any vertex $v \in V$, $\dist\,(F_v, Z^1(X) \!\setminus\! B^1(X)) \ge \frac{k_0}{2}$.
\end{lem}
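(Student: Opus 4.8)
The plan is to exploit the defining feature of cosystolic expansion: every non-trivial cocycle of $X$ must be \emph{very large}, while a local view $F_v$ is supported on only a constant number of edges, so the two cannot be close once $|V|$ is large. First I would dispose of the degenerate case $Z^1(X)\setminus B^1(X)=\emptyset$, in which the claimed bound holds vacuously. Otherwise, fix an arbitrary $C\in Z^1(X)\setminus B^1(X)$. Since $|\delta(C)|=0$ while $C\notin B^1(X)$, the first bullet in the definition of an $(\epsilon,\mu)$-cosystolic expander, applied with $i=1$, forces $|C|\ge\mu|X(1)|=\mu|E|$.

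The second ingredient is the elementary bound $|F_v|\le k_0$: every edge of $F_v$ contains $v$, and by $k_0$-regularity there are exactly $k_0$ edges of $X$ through $v$. Combining,
$$\dist(F_v,C)=|F_v\setminus C|+|C\setminus F_v|\ \ge\ |C\setminus F_v|\ \ge\ |C|-|F_v|\ \ge\ \mu|E|-k_0.$$
Since $2|E|=k_0|V|$, the right-hand side equals $k_0\!\left(\tfrac{\mu|V|}{2}-1\right)$, which is at least $k_0/2$ as soon as $|V|\ge 3/\mu$ --- exactly the ``$|V|$ large enough'' regime in which the lemma is stated. As $C$ was arbitrary, $\dist(F_v,Z^1(X)\setminus B^1(X))\ge k_0/2$.

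I do not anticipate a real obstacle. The only thing to get right is the asymptotic bookkeeping: $|F_v|$ is bounded by the \emph{constant} $k_0$, independent of $|V|$, whereas the guaranteed size $\mu|E|$ of a non-trivial cocycle grows linearly in $|V|$, so it dominates. Morally the statement is just ``cosystolic expansion forbids small non-trivial cocycles'' specialised to the observation that local views are small.
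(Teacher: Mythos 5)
Your proof is correct and follows essentially the same route as the paper's: bound $\dist(F_v,z)\ge|z|-|F_v|\ge|z|-k_0$, invoke the first bullet of cosystolic expansion to get $|z|\ge\mu|E|=\mu k_0|V|/2$, and conclude for $|V|\ge 3/\mu$, which is exactly the paper's threshold. The only cosmetic difference is that you explicitly note the vacuous case $Z^1(X)\setminus B^1(X)=\emptyset$, which the paper leaves implicit.
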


\begin{proof}
Let $F \subseteq E$ and let $v \in V$.
For any $z \in Z^1(X) \!\setminus\! B^1(X)$,
\begin{equation}\nonumber
\dist(F_v, z) = |F_v \setminus z| + |z \setminus F_v| \ge |z| - |F_v| \ge |z| - k_0.
\end{equation}

Thus, the claim holds if $|z| \ge \frac{3}{2}k_0$ for every $z \in Z^1(X) \!\setminus\! B^1(X)$.
We show that it holds for $|V|$ large enough.
In particular, assume that $|V| \ge \frac{3}{\mu}$.
Then for every $z \in Z^1(X) \!\setminus\! B^1(X)$,
\begin{equation}\nonumber
|z| \ge \mu|E| = \mu\frac{k_0|V|}{2} \ge \mu\frac{3k_0}{2\mu} = \frac{3}{2}k_0,
\end{equation}
where the first inequality holds since $z$ is a non-trivial non-expanding set of edges, and thus by the cosystolic expansion of $X$ its size must be at least $\mu |E|$.
\end{proof}

\begin{cor}[Distance of local view from the non-expanding sets]\label{distance-from-cocycles-lemma}
Let $X=(V,E,T)$ be a $2$-dimensional $(k_0, k_1)$-regular $(\epsilon, \mu)$-cosystolic expander.
If $\widetilde \lambda_2(G_0(X)) < \rfrac{1}{2}$, then for any subset of edges $\emptyset \ne F \subsetneq E$ and any vertex $v \in V$, $\mbox{dist}\,(F_v, Z^1(X)) = \min\{|F_v|,\, k_0-|F_v|\}$.
\end{cor}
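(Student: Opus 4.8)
The plan is to decompose $Z^1(X)$ into the coboundaries $B^1(X)$ and the remaining cocycles $Z^1(X)\setminus B^1(X)$ and handle the two pieces separately, using that for a set $A$ of edge-subsets $\dist(F_v, A) = \min_{z \in A}\dist(F_v, z)$, so that $\dist(F_v, Z^1(X)) = \min\{\dist(F_v, B^1(X)),\ \dist(F_v, Z^1(X)\setminus B^1(X))\}$. Since $F_v$ is contained in the star $\delta(\{v\})$, which has exactly $k_0$ edges, we have $\min\{|F_v|,\, k_0-|F_v|\} \le \frac{k_0}{2}$, while Lemma~\ref{distance-from-cosystoles-lemma} already gives $\dist(F_v, Z^1(X)\setminus B^1(X)) \ge \frac{k_0}{2}$; hence the second piece can never be the unique minimizer, and it suffices to prove $\dist(F_v, B^1(X)) = \min\{|F_v|,\, k_0-|F_v|\}$.

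For the upper bound I would simply exhibit two nearby coboundaries. The empty set is a coboundary, $\emptyset = \delta(\emptyset) \in B^1(X)$, at distance $|F_v|$ from $F_v$, and the star $\delta(\{v\}) \in B^1(X)$ satisfies $F_v \subseteq \delta(\{v\})$, hence lies at distance $|\delta(\{v\})| - |F_v| = k_0 - |F_v|$. Therefore $\dist(F_v, B^1(X)) \le \min\{|F_v|,\, k_0-|F_v|\}$.

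The matching lower bound is the only real content: I claim every cut $\delta(S)$, $S \subseteq V$, lies at distance at least $\min\{|F_v|,\, k_0 - |F_v|\}$ from $F_v$. If $\delta(S) = \emptyset$ this is immediate since $\dist(F_v, \emptyset) = |F_v|$. Otherwise $S \notin \{\emptyset, V\}$, so the hypothesis $\widetilde\lambda_2(G_0(X)) < \rfrac{1}{2}$ together with Lemma~\ref{large-cuts-lemma} applied to the $k_0$-regular graph $G_0(X)$ yields $|\delta(S)| \ge k_0$; combining this with the elementary inequality $\dist(A,B) = |A \triangle B| \ge \bigl||A| - |B|\bigr|$ gives $\dist(F_v, \delta(S)) \ge |\delta(S)| - |F_v| \ge k_0 - |F_v| \ge \min\{|F_v|,\, k_0-|F_v|\}$. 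Minimizing over $S$ proves $\dist(F_v, B^1(X)) \ge \min\{|F_v|,\, k_0-|F_v|\}$, which together with the upper bound and the first paragraph completes the proof.

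I expect the only genuine difficulty to be conceptual rather than computational: recognizing that the distance from a local view $F_v$ to a cut is governed purely by the size of that cut, so that the spectral-gap assumption on $G_0(X)$ enters exactly through the statement (Lemma~\ref{large-cuts-lemma}) that a nonempty cut must contain at least a full star's worth of edges. After that, everything reduces to bookkeeping with symmetric differences; the one point to be careful about is that the ``for $|V|$ large enough'' provisos of Lemmas~\ref{large-cuts-lemma} and~\ref{distance-from-cosystoles-lemma} are inherited, which is why the corollary is likewise stated asymptotically.
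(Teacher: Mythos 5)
Your proposal is correct and follows essentially the same route as the paper: exhibit $\emptyset$ and the star $\delta(\{v\})$ as the two nearby coboundaries, use Lemma~\ref{large-cuts-lemma} to lower-bound the distance to every nonempty cut by $k_0 - |F_v|$, and then invoke Lemma~\ref{distance-from-cosystoles-lemma} together with $\min\{|F_v|, k_0-|F_v|\} \le k_0/2$ to rule out the non-trivial cocycles. No gaps.
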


\begin{proof}
Let $\emptyset \ne F \subsetneq E$, and let $v \in V$.
Note that for $S \in \{\emptyset, V\}$,
\begin{equation}\label{distance-from-cocycles-eq1}
\dist(F_v, \delta(S)) = \dist(F_v, \emptyset) = |F_v|.
\end{equation}

For any other subset of vertices $\emptyset \ne S \subsetneq V$, by Lemma \ref{large-cuts-lemma}, $|\delta(S)| = |E(S,\bar S)| \ge k_0$, and thus,
\begin{equation}\label{distance-from-cocycles-eq2}
\dist(F_v, \delta(S)) = |F_v \setminus \delta(S)| + |\delta(S) \setminus F_v| \ge |\delta(S)| - |F_v| \ge k_0 - |F_v|,
\end{equation}
where equality holds for $S = \{v\}$.

Recall that the trivial non-expanding sets are $B^1(X) = \{\delta(S) \;|\; S \subseteq V\}$.
Thus, by (\ref{distance-from-cocycles-eq1}) and (\ref{distance-from-cocycles-eq2}) we can calculate the distance of $F_v$ from the trivial non-expanding sets.
\begin{equation}\nonumber
\dist(F_v, B^1(X)) = \min_{S \subseteq V}\dist(F_v, \delta(S)) = \min\{|F_v|,\, k_0 - |F_v|\}.
\end{equation}

It is left to show that the distance of $F_v$ from the non-trivial non-expanding sets is larger than its distance from the trivial non-expanding sets.
Indeed, by Lemma \ref{distance-from-cosystoles-lemma},
\begin{equation}\nonumber
\dist(F_v, Z^1(X) \!\setminus\! B^1(X)) \ge \frac{k_0}{2} \ge \min\{|F_v|,\, k_0 - |F_v|\} = \dist(F_v, B^1(X)).
\end{equation}

Thus, the distance of $F_v$ from all the non-expanding sets is its distance from trivial non-expanding sets, i.e.,
\begin{equation}\nonumber
\dist(F_v, Z^1(X)) = \dist(F_v, B^1(X)) = \min\{|F_v|,\, k_0 - |F_v|\}.
\end{equation}
\end{proof}

Now we can prove the following lemma, which shows that the local views of semi-fat and non-fat vertices have a large coboundary.

\begin{lem}[Coboundaries of local views of semi-fat and non-fat vertices]\label{local-view-coboundary-lemma}
Let $X=(V,E,T)$ be a $2$-dimensional $(k_0, k_1)$-regular $(\epsilon, \mu)$-cosystolic expander, let $\rfrac{1}{2} < \eta < 1$ be a fatness constant, and let $F \subseteq E$ be a subset of edges.
If $\widetilde \lambda_2(G_0(X)) < \rfrac{1}{2}$ then,
\begin{itemize}
  \item For any semi-fat vertex $v \in V$, $|\delta(F_v)| \ge \epsilon k_1(1-\eta)k_0$.
  \item For any non-fat vertex $v \in V$, $|\delta(F_v)| \ge \epsilon k_1|F_v|$.
\end{itemize}
\end{lem}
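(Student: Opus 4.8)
The plan is to read both bounds straight out of the defining inequality of $(\epsilon,\mu)$-cosystolic expansion, using Corollary~\ref{distance-from-cocycles-lemma} to evaluate the distance of $F_v$ from the cocycles. First dispose of the trivial extremes: if $F=\emptyset$ then every vertex is non-fat with $|F_v|=0$, and both asserted inequalities read $0\ge 0$; if $F=E$ then $|F_v|=k_0>\eta k_0$ for every $v$ (since $\eta<1$), so there are no semi-fat or non-fat vertices and the statement is vacuous. Hence assume $\emptyset\ne F\subsetneq E$. Since $\widetilde\lambda_2(G_0(X))<\rfrac12$, Corollary~\ref{distance-from-cocycles-lemma} applies and gives, for every $v\in V$,
\[
\dist(F_v,Z^1(X))=\min\{|F_v|,\,k_0-|F_v|\}.
\]
From here the uniform recipe is: whenever this distance is strictly positive, $F_v\notin Z^1(X)$, i.e. $|\delta(F_v)|\ne 0$, so the non-degenerate clause of cosystolic expansion yields $|\delta(F_v)|\ge\epsilon k_1\,\dist(F_v,Z^1(X))$, and it only remains to substitute the value of the distance.

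For a non-fat vertex $v$ we have $|F_v|\le\rfrac{k_0}{2}$, so $\min\{|F_v|,k_0-|F_v|\}=|F_v|$. If $|F_v|=0$ then $\delta(F_v)=0=\epsilon k_1|F_v|$ and nothing is to prove. If $|F_v|>0$ then $\dist(F_v,Z^1(X))=|F_v|>0$, so $F_v\notin Z^1(X)$, hence $|\delta(F_v)|\ne 0$, and cosystolic expansion gives $|\delta(F_v)|\ge\epsilon k_1\,\dist(F_v,Z^1(X))=\epsilon k_1|F_v|$.

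For a semi-fat vertex $v$ we have $\rfrac{k_0}{2}<|F_v|\le\eta k_0$, so $k_0-|F_v|<\rfrac{k_0}{2}<|F_v|$ and therefore $\min\{|F_v|,k_0-|F_v|\}=k_0-|F_v|\ge(1-\eta)k_0>0$. Again positivity of the distance forces $F_v\notin Z^1(X)$, so $|\delta(F_v)|\ne0$, and cosystolic expansion gives $|\delta(F_v)|\ge\epsilon k_1\,\dist(F_v,Z^1(X))=\epsilon k_1(k_0-|F_v|)\ge\epsilon k_1(1-\eta)k_0$, completing the argument.

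The only genuine subtlety — and the step I expect to carry the weight — is justifying that the non-expanding (``$|\delta(F_v)|=0$'') branch of cosystolic expansion never interferes when $0<|F_v|<k_0$: one must rule out that $F_v$ is a nontrivial cut or a huge trivial zero. This is exactly what Corollary~\ref{distance-from-cocycles-lemma} packages, and it is there that the two hypotheses are spent: the bound $\widetilde\lambda_2(G_0(X))<\rfrac12$ (via Lemma~\ref{large-cuts-lemma}, which makes every nontrivial cut have $\ge k_0$ edges and hence be far from $F_v$) and the cosystolic threshold $\mu|E|$ (which, for $|V|$ large, exceeds $k_0\ge|F_v|$, so $F_v$ cannot be a very large non-trivial non-expanding set). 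Everything after invoking the corollary is a one-line substitution in each of the two cases.
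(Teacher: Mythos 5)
Your proposal is correct and follows essentially the same route as the paper: apply Corollary~\ref{distance-from-cocycles-lemma} to compute $\dist(F_v,Z^1(X))=\min\{|F_v|,k_0-|F_v|\}$, then invoke the non-degenerate clause of cosystolic expansion. Your extra care with the degenerate cases ($F\in\{\emptyset,E\}$, where the corollary's hypothesis fails, and $|F_v|=0$, where one must check the non-zero-coboundary clause actually applies) is a small but legitimate tightening of the paper's argument, which glosses over these points.
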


\begin{proof}
Let $v \in V$.
If $v$ is a semi-fat vertex, then by Corollary~\ref{distance-from-cocycles-lemma},
\begin{equation}\nonumber
\dist(F_v, Z^1(X)) = \min\{|F_v|,\, k_0 - |F_v|\} = k_0 - |F_v| \ge (1 - \eta)k_0,
\end{equation}
where the last equality holds since $|F_v| > \frac{k_0}{2}$, and the last inequality holds since $|F_v| \le \eta k_0$.

Thus, by the cosystolic expansion of $X$ we get,
\begin{equation}\nonumber
|\delta(F_v)| \ge \epsilon \,k_1\,\dist(F_v, Z^1(X)) \ge \epsilon k_1(1 - \eta)k_0.
\end{equation}

If $v$ is a non-fat vertex, then by Corollary~\ref{distance-from-cocycles-lemma},
\begin{equation}\nonumber
\dist(F_v, Z^1(X)) = \min\{|F_v|,\, k_0 - |F_v|\} = |F_v|,
\end{equation}
where the last equality holds since $|F_v| \le \frac{k_0}{2}$.

Thus, by the cosystolic expansion of $X$ we get,
\begin{equation}\nonumber
|\delta(F_v)| \ge \epsilon \,k_1\,\dist(F_v, Z^1(X)) = \epsilon k_1|F_v|.
\end{equation}
\end{proof}

\subsection{Lower bound on the sum of coboundaries}
The following is the main technical lemma of our proof, which provides a lower bound on the sum of the coboundaries of the local views of all the vertices in the complex.

\begin{lem}[Sum of coboundaries]\label{sum-of-coboundaries-lemma}
Let $X=(V,E,T)$ be a $2$-dimensional $(k_0, k_1)$-regular $(\epsilon, \mu)$-cosystolic expander.
If $\widetilde \lambda_2 = \widetilde \lambda_2(G_0(X)) < \rfrac{1}{2}$, then for any subset of edges $F \subseteq E$ such that $\frac{|F|}{|E|} \le \rfrac{1}{2}$,
$$\sum_{v \in V}|\delta(F_v)| \ge \frac{\epsilon k_1}{4}\left(3\sqrt{(1+2\widetilde\lambda_2)^2 + 32} - 2\widetilde\lambda_2 - 17 \right)|F|.$$
\end{lem}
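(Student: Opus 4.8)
The plan is to combine Lemma~\ref{outgoing-edges-lemma}'s reformulation with the bounds of Lemma~\ref{local-view-coboundary-lemma}, splitting the vertex set according to the fatness classification and then optimizing over the fatness constant $\eta$. So fix $F \subseteq E$ with $|F|/|E| \le 1/2$, let $V_{\mathrm{fat}}, V_{\mathrm{sf}}, V_{\mathrm{nf}}$ be the sets of fat, semi-fat and non-fat vertices with respect to $F$ and a parameter $\eta \in (1/2,1)$ to be chosen at the end, and set $n_{\mathrm{fat}} = |V_{\mathrm{fat}}|$, etc. First I would bound $\sum_{v}|\delta(F_v)|$ from below by discarding the fat vertices entirely (their contribution is nonnegative) and applying Lemma~\ref{local-view-coboundary-lemma}: each semi-fat vertex contributes at least $\epsilon k_1 (1-\eta)k_0$, and the non-fat vertices contribute at least $\epsilon k_1 \sum_{v \in V_{\mathrm{nf}}} |F_v|$. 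The key quantity to control is therefore $\sum_{v \in V_{\mathrm{nf}}} |F_v|$, which I would relate to $|F|$ via the handshake identity $\sum_{v \in V} |F_v| = 2|F|$.

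Next, the two regimes in the proof sketch: either there are many semi-fat vertices, or almost all vertices are fat and we must find many edges incident to non-fat vertices. I would make this quantitative as follows. Every edge of $F$ with \emph{both} endpoints fat contributes $0$ to $\sum_{v \in V_{\mathrm{nf}}} |F_v|$, while every edge of $F$ with at least one non-fat endpoint contributes at least $1$; edges with a semi-fat endpoint can be folded into the semi-fat count. So $\sum_{v \in V_{\mathrm{nf}}} |F_v| \ge |F| - |E(V_{\mathrm{fat}})| - (\text{edges touching } V_{\mathrm{sf}})$, and $|E(V_{\mathrm{fat}})|$ is controlled by the Expander Mixing Lemma (Lemma~\ref{expander-mixing-lemma}) applied to $G_0(X)$: $2|E(V_{\mathrm{fat}})| \le k_0 n_{\mathrm{fat}}(n_{\mathrm{fat}}/|V| + \widetilde\lambda_2)$. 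To turn $n_{\mathrm{fat}}$ into something usable I would bound it using $|F| \ge \frac{1}{2}\sum_{v \in V_{\mathrm{fat}}}|F_v| > \frac{1}{2}\eta k_0 n_{\mathrm{fat}}$, hence $n_{\mathrm{fat}} < 2|F|/(\eta k_0)$, and I would also use $n_{\mathrm{fat}} \le |V| = 2|E|/k_0$ together with $|F| \le |E|/2$. Combining these gives $|E(V_{\mathrm{fat}})| \le \frac{|F|}{\eta}\big(\frac{|F|}{|E|} + \widetilde\lambda_2\big) \le \frac{|F|}{\eta}\big(\frac12 + \widetilde\lambda_2\big)$. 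Similarly the number of edges of $F$ touching a semi-fat vertex is at most $k_0 n_{\mathrm{sf}}$, which is bounded in terms of $n_{\mathrm{sf}}$; I expect the cleanest route is to keep $n_{\mathrm{sf}}$ as a free variable and track both its "large" and "small" effects simultaneously rather than literally casing on it.

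Putting the pieces together, I would write the total as a sum of two nonnegative contributions — one proportional to $n_{\mathrm{sf}}$ (from the semi-fat vertices directly) and one of the form $\epsilon k_1\big(|F|(1 - \frac{1}{\eta}(\frac12 + \widetilde\lambda_2)) - c\,k_0 n_{\mathrm{sf}}\big)$ (from the non-fat edge count) — and choose the split so that the $n_{\mathrm{sf}}$-dependence cancels, leaving a bound of the shape $\epsilon k_1 |F| \cdot g(\eta, \widetilde\lambda_2)$ where $g(\eta,\widetilde\lambda_2) = \min\{(1-\eta)\cdot(\text{something}),\ 1 - \tfrac{1}{\eta}(\tfrac12+\widetilde\lambda_2)\}$ up to constants. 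The final step is to optimize $g$ over $\eta \in (1/2,1)$: the first branch decreases in $\eta$, the second increases in $\eta$, so the optimum is at their crossing point, which is the root of a quadratic in $\eta$ — this is exactly where the expression $3\sqrt{(1+2\widetilde\lambda_2)^2+32} - 2\widetilde\lambda_2 - 17$ and the factor $1/4$ come from. I expect the main obstacle to be the bookkeeping in the middle step: correctly accounting for edges with a semi-fat endpoint without double counting, and getting the constants in the two competing bounds to align so that the $\eta$-optimization produces precisely the stated closed form rather than a weaker constant. The hypothesis $\widetilde\lambda_2 < 1/2$ is what guarantees $1 - \frac{1}{\eta}(\frac12 + \widetilde\lambda_2) > 0$ for $\eta$ close to $1$, so the bound is nonvacuous; I would also need to invoke the "$|V|$ large enough" caveats inherited from Lemma~\ref{large-cuts-lemma} and Corollary~\ref{distance-from-cocycles-lemma} through Lemma~\ref{local-view-coboundary-lemma}.
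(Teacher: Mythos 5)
Your overall architecture is the paper's: classify vertices as fat/semi-fat/non-fat, discard the fat ones, apply Lemma~\ref{local-view-coboundary-lemma} to the other two classes, control the fat vertices via the handshake identity and the Expander Mixing Lemma on $G_0(X)$, and finally optimize $\eta$ so that two competing branches meet. The one substantive divergence is your treatment of the semi-fat vertices, and it is exactly there that the argument, carried out as described, falls short of the stated constant. You propose to apply the Mixing Lemma only to $V_{\mathrm{fat}}$ and to charge the edges of $F$ meeting $V_{\mathrm{sf}}$ by a degree count ($\le \eta k_0 n_{\mathrm{sf}}$, or $k_0 n_{\mathrm{sf}}$), hoping the $n_{\mathrm{sf}}$-dependence cancels against the direct semi-fat contribution $\epsilon k_1(1-\eta)k_0 n_{\mathrm{sf}}$. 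It does not cancel: the net coefficient of $k_0 n_{\mathrm{sf}}$ is $(1-\eta)-\eta<0$ since $\eta>\rfrac{1}{2}$, so after taking the $\min$ over $n_{\mathrm{sf}}$ you get a bound of the form $\epsilon k_1\,\frac{1-\eta}{\eta}\bigl(|F|-|E(V_{\mathrm{fat}})|\bigr)$, whose optimum over $\eta$ is strictly smaller than $\frac{1}{4}\bigl(3\sqrt{(1+2\widetilde\lambda_2)^2+32}-2\widetilde\lambda_2-17\bigr)$ (e.g.\ at $\widetilde\lambda_2=0$ it tops out near $0.056$ versus the paper's $\approx 0.0586$). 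There is also a small slip in the Mixing Lemma step: from $n_{\mathrm{fat}}<2|F|/(\eta k_0)$ one gets $n_{\mathrm{fat}}/|V|<\frac{|F|}{\eta|E|}$, not $\frac{|F|}{|E|}$; with your (unjustified) stronger version the optimization would spuriously beat the stated constant, so this factor of $\eta^{-1}$ actually matters.

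The paper's fix is the ingredient you are missing: it cases explicitly on $n_{\mathrm{sf}}$. If $\frac{|B|}{|V|}>(\eta^{-1}-1)\frac{|F|}{|E|}$, the semi-fat contribution alone already gives $\epsilon k_1(2\eta^{-1}+2\eta-4)|F|$. Otherwise $\frac{|A\cup B|}{|V|}<(2\eta^{-1}-1)\frac{|F|}{|E|}$, and the Mixing Lemma is applied to the \emph{union} $A\cup B$, so that \emph{all} edges of $F$ not meeting a non-fat vertex are controlled spectrally rather than by a trivial degree count on $B$; this yields $\epsilon k_1\bigl(1-(2\eta^{-1}-1)(\frac{2\eta^{-1}-1}{2}+\widetilde\lambda_2)\bigr)|F|$, and the choice $\eta=\frac{1}{8}\bigl(1+2\widetilde\lambda_2+\sqrt{(1+2\widetilde\lambda_2)^2+32}\bigr)$ makes the two cases coincide, producing the exact closed form. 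So your proposal proves a lemma of the same shape with a weaker constant, which would not literally establish the statement (nor the constant propagated into Theorem~\ref{main-thm}); to get the stated bound you should replace the degree-count on semi-fat-incident edges by the case split plus Mixing Lemma on $A\cup B$. (Your passing reference to Lemma~\ref{outgoing-edges-lemma} is harmless but unnecessary here; it enters only in the proof of the main theorem.)
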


\begin{proof}
Let $\rfrac{1}{2} < \eta < 1$ be a fatness constant which will be defined later.
Let $F \subseteq E$ be a subset of edges such that $\frac{|F|}{|E|} \le \rfrac{1}{2}$.
Define $A,B,C \subseteq V$ as the sets of fat, semi-fat and non-fat vertices, respectively.

First we bound the size of $A$.
On the one hand, the local view of each vertex in $A$ is very large, and thus,
\begin{equation}\label{bound-fat-1}
\sum_{v \in A}|F_v| > \sum_{v \in A}\eta k_0 = \eta k_0|A|.
\end{equation}

On the other hand,
\begin{equation}\label{bound-fat-2}
\sum_{v \in A}|F_v| \le \sum_{v \in V}|F_v| = 2|F| = 2|F|\frac{|E|}{|E|} = k_0|V|\frac{|F|}{|E|},
\end{equation}
where the last equality holds since $|E| = \frac{k_0|V|}{2}$.

From (\ref{bound-fat-1}) and (\ref{bound-fat-2}) we get the following upper bound on the size of $A$.
\begin{equation}\nonumber
\frac{|A|}{|V|} < \eta^{-1}\frac{|F|}{|E|}.
\end{equation}

Now we split to two cases according to the number of the semi-fat vertices.

(i) Assume that $\frac{|B|}{|V|} > (\eta^{-1}-1)\frac{|F|}{|E|}$.
By Lemma \ref{local-view-coboundary-lemma} we obtain the following lower bound on the sum of the coboundaries of the local views of all the vertices in the complex.
\begin{equation}\label{bound-sum-semi-fat}
\begin{split}
\sum_{v \in V}|\delta(F_v)| &\ge
\sum_{v \in B}|\delta(F_v)| \ge
\sum_{v \in B}\epsilon k_1(1 - \eta)k_0 >
\epsilon k_1(1 - \eta)k_0(\eta^{-1}-1)\frac{|V||F|}{|E|} \\[5pt] &=
2\epsilon k_1(1 - \eta)(\eta^{-1}-1)|F| =
\epsilon k_1(2\eta^{-1} + 2\eta - 4)|F|.
\end{split}
\end{equation}

(ii) In this case, $\frac{|B|}{|V|} \le (\eta^{-1}-1)\frac{|F|}{|E|}$.
Thus, $\frac{|A \cup B|}{|V|} < (2\eta^{-1} - 1)\frac{|F|}{|E|}$.
Then by the Expander Mixing Lemma (Lemma~\ref{expander-mixing-lemma}) we can bound the number of edges with both endpoints in the sets of fat or semi-fat vertices.
\begin{equation}\nonumber
\begin{split}
|E(A \cup B)| &\le
\frac{k_0|A \cup B|}{2} \left(\frac{|A \cup B|}{|V|} + \widetilde\lambda_2 \right) \\[5pt] &\le
\frac{k_0(2\eta^{-1}-1)|F||V|}{2|E|} \left(\!(2\eta^{-1}-1)\frac{|F|}{|E|} + \widetilde\lambda_2 \right) \\[5pt] &\le
(2\eta^{-1}-1) \left(\frac{2\eta^{-1}-1}{2} + \widetilde\lambda_2 \right)|F|,
\end{split}
\end{equation}
where the last inequality holds since $\frac{k_0|V|}{2} = |E|$ and $\frac{|F|}{|E|} \le \rfrac{1}{2}$.

Since the number of edges with both endpoints in $A \cup B$ is bounded, then there are at least $\Big(\!1 - (2\eta^{-1}-1) \big(\frac{2\eta^{-1}-1}{2} + \widetilde\lambda_2 \big)\!\Big)|F|$ edges for which at least one of their endpoints is in $C$.
Thus,
\begin{equation}\nonumber
\sum_{v \in C}|F_v| \ge \biggr(\!1 - (2\eta^{-1}-1) \biggr(\frac{2\eta^{-1}-1}{2} + \widetilde\lambda_2 \biggr)\!\!\biggr)|F|.
\end{equation}

Then, by Lemma \ref{local-view-coboundary-lemma} we obtain the following lower bound on the sum of the coboundaries of the local views of all the vertices in the complex.
\begin{equation}\label{bound-sum-non-fat}
\begin{split}
\sum_{v \in V}|\delta(F_v)| &\ge
\sum_{v \in C}|\delta(F_v)| \ge
\sum_{v \in C}\epsilon k_1|F_v| =
\epsilon k_1\sum_{v \in C}|F_v| \\[5pt] &\ge
\epsilon k_1 \biggr(\!1 - (2\eta^{-1}-1) \biggr(\frac{2\eta^{-1}-1}{2} + \widetilde\lambda_2 \biggr)\!\!\biggr)|F|.
\end{split}
\end{equation}

Now set the following fatness constant.
$$\eta = \frac{1}{8}\left(1+2\widetilde\lambda_2 + \sqrt{(1+2\widetilde\lambda_2)^2 + 32}\right).$$

Note that since $0 < \widetilde\lambda_2 < \rfrac{1}{2}$ then $\frac{1+\sqrt{33}}{8} < \eta < 1$.
Also note that $\widetilde\lambda_2 = 2\eta - \eta^{-1} - \rfrac{1}{2}$.
Then (\ref{bound-sum-non-fat}) can be simplified as follows.
\begin{equation}\nonumber
\begin{split}
\sum_{v \in V}|\delta(F_v)| &\ge
\epsilon k_1 \biggr(\!1 - (2\eta^{-1}-1) \biggr(\frac{2\eta^{-1}-1}{2} + \widetilde\lambda_2 \biggr)\!\!\biggr)|F| \\ &=
\epsilon k_1 \biggr(\!1 - (2\eta^{-1}-1) \biggr(\frac{2\eta^{-1}-1}{2} + 2\eta - \eta^{-1} - \rfrac{1}{2} \biggr)\!\!\biggr)|F| \\[8pt] &=
\epsilon k_1 \left(1 - (2\eta^{-1}-1) (2\eta - 1)\right)|F| =
\epsilon k_1(2\eta^{-1} + 2\eta - 4)|F|,
\end{split}
\end{equation}
which is exactly the same as (\ref{bound-sum-semi-fat}).

Thus, in both cases the following holds.
\begin{equation}\nonumber
\begin{split}
\sum_{v \in V}|\delta(F_v)| &\ge
\epsilon k_1(2\eta^{-1} + 2\eta - 4)|F| =
\epsilon k_1(\widetilde\lambda_2 + 3\eta^{-1} - \rfrac{7}{2})|F| \\[-4pt] &=
\epsilon k_1\left(\widetilde\lambda_2 + \frac{24}{1+2\widetilde\lambda_2 + \sqrt{(1+2\widetilde\lambda_2)^2 + 32}} - \frac{7}{2}\right)\!|F| \\[5pt] &=
\epsilon k_1\left(\widetilde\lambda_2 - \frac{24(1+2\widetilde\lambda_2 - \sqrt{(1+2\widetilde\lambda_2)^2 + 32})}{32} - \frac{7}{2}\right)\!|F| \\[5pt] &=
\frac{\epsilon k_1}{4}\left(3\sqrt{(1+2\widetilde\lambda_2)^2 + 32} - 2\widetilde\lambda_2 - 17 \right)\!|F|,
\end{split}
\end{equation}
which finishes the proof.
\end{proof}

\subsection{Proof of Theorem \ref{main-thm}}
In order to prove the theorem we need to show that $G_1 = G_1(X) = (V_{G_1}, E_{G_1})$ is a spectral expander graph.
We show first that $G_1$ is a combinatorial expander graph, which then implies a bound on $\widetilde \lambda_2(G_1)$.

Let $S \subseteq V_{G_1}$ be a subset of vertices in the edge-graph such that $0 < \frac{|S|}{|V_{G_1}|} \le \rfrac{1}{2}$, and denote by $F \subseteq E$ the corresponding subset of edges in $X$.
By Lemma \ref{outgoing-edges-lemma} we know that $|E_{G_1}(S, \bar S)| = \sum_{v \in V}|\delta(F_v)|$.
Since $\frac{|F|}{|E|} = \frac{|S|}{|V_{G_1}|} \le \rfrac{1}{2}$, we can apply Lemma \ref{sum-of-coboundaries-lemma} and get following.
\begin{equation}\nonumber
\begin{split}
|E_{G_1}(S, \bar S)| &=
\sum_{v \in V}|\delta(F_v)| \ge
\frac{\epsilon k_1}{4}\left(3\sqrt{(1+2\widetilde\lambda_2)^2 + 32} - 2\widetilde\lambda_2 - 17 \right)\!|F| \\[5pt] &=
\frac{\epsilon k_1}{4}\left(3\sqrt{(1+2\widetilde\lambda_2)^2 + 32} - 2\widetilde\lambda_2 - 17 \right)\!|S|.
\end{split}
\end{equation}

Note that since each edge of $X$ is contained in $k_1$ triangles, then $G_1$ is $2k_1$-regular.
Thus, we get the following lower bound on the normalized Cheeger constant of $G_1$.
\begin{equation}\nonumber
\widetilde h(G_1) =
\min_{\substack{S \subseteq V_{G_1} \\ 0 < \frac{|S|}{|V_{G_1}|} \le \rfrac{1}{2}}}\frac{|E_{G_1}(S, \bar S)|}{2k_1|S|} \ge
\frac{\epsilon}{8}\left(3\sqrt{(1+2\widetilde\lambda_2)^2 + 32} - 2\widetilde\lambda_2 - 17 \right).
\end{equation}

Then by the Cheeger's inequality (Lemma~\ref{cheeger-inequality}) we get the following upper bound on $\widetilde \lambda_2(G_1)$.
\begin{equation}\nonumber
\widetilde \lambda_2(G_1) \le
1 - \frac{\widetilde h(G_1)^2}{2} \le
1 - \frac{\epsilon^2}{128}\left(3\sqrt{(1+2\widetilde\lambda_2)^2 + 32} - 2\widetilde\lambda_2 - 17 \right)^{\!\!2}.
\end{equation}

The bound on $\widetilde \lambda_n(G_1)$ is derived from the structure of the graph $G_1$, which is far from having a bipartite component.
By Lemma~\ref{cheeger-type-inequality},
\begin{equation}\nonumber
\widetilde \lambda_n(G_1) \ge -\frac{17}{18}.
\end{equation}

Since $|\widetilde \lambda_2(G_1)| \ge |\widetilde \lambda_n(G_1)|$, then $G_1$ is an $\alpha$-spectral expander for $\alpha = \widetilde \lambda_2(G_1)$.
Thus, by Theorem~\ref{rapid-mixing-thm} the random walk on $G_1$ is $\alpha$-rapidly mixing, which implies that the high order random walk on the edges of $X$ is $\alpha$-rapidly mixing, which finishes the proof.


\end{document}